\documentclass{article}[12pt]

\usepackage[]{amsmath,amssymb,amsfonts,latexsym,amsthm,enumerate,cite,fullpage}

\sloppy

\newtheorem{thm}{Theorem}[section]
\newtheorem*{thm*}{Theorem}
\newtheorem{lemma}[thm]{Lemma}
\newtheorem*{lemma*}{Lemma}

\newtheorem*{prop*}{Proposition}
\newtheorem{cor}[thm]{Corollary}

\newtheorem{claim}[thm]{Claim}
\newtheorem{conj}[thm]{Conjecture}

\newtheorem{prob}[thm]{Problem}

\def\E{\mathbb{E}}
\def\F{\mathbb{F}}

\def\eps{\varepsilon}
\def\rank{\mathrm{rank}}
\def\poly{\mathrm{poly}}
\def\CC{\mathrm{CC}}
\def\CCdet{\CC^{\mathrm{det}}}
\def\bias{\mathrm{bias}}
\def\disc{\mathrm{disc}}
\newcommand{\ip}[1]{\left<#1\right>}

\newcommand\ignore[1]{}

\newcommand{\restate}[2]{\medskip \noindent {\bf #1 (restated)}{\sl #2}}

\begin{document}

\title{Recent advances on the log-rank conjecture in communication
  complexity}

\author{ Shachar Lovett\thanks{CSE department, UC San-Diego. e-mail:
    \texttt{slovett@cse.ucsd.edu}.}  }

\maketitle

\begin{abstract}
  The log-rank conjecture is one of the fundamental open problems in
  communication complexity. It speculates that the deterministic
  communication complexity of any two-party function is equal to the
  log of the rank of its associated matrix, up to polynomial
  factors. Despite much research, we still know very little about this
  conjecture. Recently, there has been renewed interest in this
  conjecture and its relations to other fundamental problems in
  complexity theory. This survey describes some of the recent
  progress, and hints at potential directions for future research.
\end{abstract}

\section{Introduction}
Communication complexity studies the amount of communication needed in
order to evaluate a function, whose output depends on information
distributed amongst two or more parties.  Since its first introduction
by Yao~\cite{yao1979some}, communication complexity was extensively
studied, to a large extent because of its applications in diverse
fields, such as circuit complexity, VLSI design, proof complexity,
streaming algorithms, data structures and more.  Still, there are many
fundamental problems about the communication complexity of functions
which are wide open.  We refer the reader to the book of Kushilevitz
and Nisan~\cite{KN97_Com} for more details on communication complexity
and its applications, and to the book of Lee and
Shraibman~\cite{lee2009lower} for an exposition of more recent lower
bound techniques in communication complexity.

In this survey, we focus on the communication complexity between two
parties.  Let $f:X \times Y \to \{0,1\}$ be a boolean function, where
one party holds an inputs $x \in X$, the other party holds an input $y
\in Y$, and their goal is to evaluate $f(x,y)$ while minimizing their
communication. For most of this survey, we will focus on deterministic
protocols, which is the simplest communication model. The
\emph{deterministic communication complexity} of $f$ is the minimal
number of bits communicated by an optimal deterministic protocol
computing $f$, and is denoted by $\CCdet(f)$.

There is a simple lower bound on the deterministic communication
complexity of functions, first observed by Mehlhorn and
Schmidt~\cite{mehlhorn1982vegas}, based on the rank of their
associated matrix.  Let $M_f$ be the $X \times Y$ matrix with $M_{x,y}
= f(x,y)$. A deterministic protocol computing $f$ in which the players
send $c$ bits of communication, corresponds to a partition of the
matrix $M_f$ to $2^c$ rectangles (a rectangle is a set $A \times B$
with $A \subset X, B \subset Y$) such that the value of $M_f$ is
constant on each rectangle. Such rectangles are called
\emph{monochromatic}. As the rank (as a real matrix) of a
monochromatic rectangle is at most one, we get that $\rank(M_f) \le
2^c$. Equivalently, if we shorthand $\rank(f)=\rank(M_f)$ then
$$
\CCdet(f) \ge \log\rank(f).
$$

The \textit{log-rank conjecture} proposed by Lov\'{a}sz and
Saks~\cite{LS88_Lat} speculates that this simple bound is tight for
all boolean functions, up to polynomial factors.

\begin{conj}[The log-rank conjecture~\cite{LS88_Lat}]\label{conj:logrank}
There exists a universal constant $C>0$ such that for any boolean function $f$,
$$
\CCdet(f) \le C (\log \rank(f))^C.
$$
\end{conj}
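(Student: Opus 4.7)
The plan is to prove the conjecture by constructing an efficient deterministic protocol recursively, driven by a structural lemma about low-rank boolean matrices. By the classical reduction of Nisan and Wigderson, it suffices to establish the following: every boolean matrix $M$ with $\rank(M) \le r$ contains a monochromatic combinatorial rectangle of relative size at least $2^{-(\log r)^{O(1)}}$. Given this, one builds the protocol by having the parties communicate which ``side'' of such a rectangle their inputs lie in (at cost $(\log r)^{O(1)}$ bits), then recursing on the residual submatrices, whose ranks are controlled. A careful accounting of the rank decrease through the recursion yields the desired $(\log r)^{O(1)}$ bound on $\CCdet(f)$.

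First I would try to weaken the structural target via the \emph{bias amplification} paradigm: instead of producing a large monochromatic rectangle directly, aim for a large rectangle on which $M$ has nontrivial bias, i.e.\ the fraction of $1$-entries is bounded away from $1/2$. Such a rectangle can often be extracted from a second-moment or spectral argument, using the fact that low rank forces only few large singular values. Once bias is achieved, one peels off the majority color on a large sub-rectangle and recurses on the residual, which has strictly smaller rank. Iterating yields a chain of partial monochromatic extractions whose total cost is governed by the rate of the rank decrease, and by how large a biased rectangle one can guarantee at each step.

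The hard part will be the quantitative strength of the structural lemma. The current best bound, due to Lovett, produces monochromatic rectangles of density only $\exp(-O(\sqrt{r}\log r))$, well short of the $\exp(-(\log r)^{O(1)})$ needed, and this appears to be a genuine barrier for purely spectral or discrepancy-based arguments. To close the gap, I would try to import tools from additive combinatorics, in particular a quantitatively strong polynomial Freiman--Ruzsa theorem over $\F_2$, which would let one argue that the distinct rows of $M$ lie close to a low-dimensional affine subspace and extract a large block structure from this. The fundamental obstacle is the mismatch between algebraic rank, which is a spectral quantity, and the combinatorial rectangle structure required by a protocol: existing structural theorems lose polynomial factors in the dimension that translate directly into unacceptable losses, so any successful proof will likely have to either sharpen such additive-combinatorial tools in the boolean setting, or identify a new invariant of $M_f$ — intermediate between $\rank(f)$ and explicit combinatorial structure — that behaves well under the recursion.
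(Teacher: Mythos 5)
The statement you are asked about is Conjecture~\ref{conj:logrank}, which is an open problem; the paper offers no proof of it, and neither does your proposal. What you have written is a research plan, not a proof, and the plan's load-bearing step is itself an unresolved conjecture. Concretely: by Theorem~\ref{thm:NW}, exhibiting a monochromatic rectangle of relative size $2^{-\poly\log(r)}$ in every rank-$r$ boolean matrix would indeed yield the log-rank conjecture, but the existence of such a rectangle is essentially \emph{equivalent} to the conjecture (a $\poly\log(r)$-bit protocol trivially produces such a rectangle by averaging over its leaves), so your reduction moves the difficulty without diminishing it. You correctly note that the best known density is $\exp(-O(\sqrt{r}\log r))$ (Theorem~\ref{thm:lovett}) and that this falls exponentially short of what is needed, but you offer no mechanism for closing that gap: the appeal to a ``quantitatively strong polynomial Freiman--Ruzsa theorem'' is not backed by any argument, and in fact the paper shows (Theorems~\ref{thm:BLR_approxdual} and~\ref{thm:BLR_cc}) that even assuming the full polynomial Freiman--Ruzsa conjecture, the known route through approximate duality only improves the bound to $O(r/\log r)$ — nowhere near $\poly\log(r)$.

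Two further points. First, your ``bias amplification'' step is not new leverage: Theorem~\ref{thm:NW_disc} already guarantees a rectangle of density $1/O(r^{3/2})$ with bias $1/O(r^{3/2})$, and the whole difficulty is converting polynomially small bias into a large \emph{monochromatic} rectangle; your proposal asserts that one can ``peel off the majority color on a large sub-rectangle'' with controlled loss, but that peeling is exactly the open structural question. Second, the claim that the residual after peeling ``has strictly smaller rank'' needs care — in the Nisan--Wigderson recursion only one of the two branches is guaranteed a rank drop to roughly $r/2$, while the other branch only shrinks in measure, which is why the rectangle density $c(r)$ enters the final bound additively over $O(\log r)$ scales. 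In short, your proposal is a reasonable survey of the known reductions and barriers, but it contains no proof, and the statement remains a conjecture.
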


Validity of the log-rank conjecture is one of the fundamental open problems in communication complexity. It is true in all known examples, but still
very little progress has been made towards resolving it. In the special case where $M_f$ is the adjacency matrix of a graph $G$, an essentially equivalent conjecture
given by van Nuffelen~\cite{van1976bound} and Fajtlowicz~\cite{fajtlowicz1988conjectures} replaces the communication complexity
by the (weaker notion) of log of the chromatic number of the graph; equivalently, that $\chi(G) \le \exp(\log^{O(1)} \rank(G))$.

A simple upper bound is
that $\CCdet(f) \le \rank(f)$, which is exponentially worse than what is conjectured by the log-rank conjecture. It follows from the simple observation that
if $\rank(f)=r$, then there could be at most $2^r$ distinct rows in $M_f$. Hence, one can assume without loss of generality that $|X| \le 2^r$, and consider a protocol in which
the first player simply sends its input $x$. In the special case of graphs, Kotlov and Lov{\'a}sz~\cite{kotlov1996rank} proved that if a graph has rank $r$,
then its chromatic number is at most $2^{r/2}$. This was later improved to $(4/3)^r$ by Kotlov~\cite{K97_Rank}.

In terms of lower bounds, a sequence of works ~\cite{alon1989counterexample,razborov1992gap,raz1995log,NW94_On_Ra} culminating in
an example due to Kushilevitz (unpublished, cf.~\cite{NW94_On_Ra}) shows that there exist functions for which $\CCdet(f) \ge (\log \rank(f))^{^{\log_36}}$. Hence,the constant $C$ in Conjecture~\ref{conj:logrank}, if it exists, must satisfy $C \ge \log_36 \approx 1.63$.

Recently, there was renewed interest in the log-rank conjecture and its relations to several other problems in complexity theory. Ben-Sasson, Ron-Zewi and the author~\cite{BLR12_An_Ad} studied the relation of the log-rank
conjecture to the approximate duality conjecture of~\cite{zewi2011affine}, and showed that if one assumes a number-theoretic conjecture (the polynomial Freiman-Ruzsa conjecture) then the trivial upper bound
can be reduced by a logarithmic factor.

\begin{thm}[\cite{BLR12_An_Ad}]\label{thm:BLR_cc}
Assuming the polynomial Freiman-Ruzsa conjecture over $\F_2^n$, for any boolean function $f$,
$$
\CCdet(f)\le O(\rank(f) / \log \rank(f)).
$$
\end{thm}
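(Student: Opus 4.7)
The plan is to use the approximate duality theorem --- which follows from PFR and underlies the relationship between PFR and the log-rank conjecture mentioned above --- to exhibit, in every rank-$r$ boolean matrix, a $0$-monochromatic combinatorial rectangle of density $2^{-O(r/\log r)}$. Once such a rectangle is in hand, the standard recursive protocol (identify a large monochromatic rectangle, use $O(1)$ bits to tell which of the residual sub-blocks $(x,y)$ lies in, and recurse on the sub-block, whose rank is at most $r$) accumulates a per-level saving of $\Omega(r/\log r)$ from the trivial bound and yields $\CCdet(f) \le O(r/\log r)$.

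To produce the rectangle, I would proceed in two stages. Since $\rank(M_f) = r$ we may assume $|X|, |Y| \le 2^r$. First --- without any number-theoretic input --- use the real rank-$r$ factorization $M_f = U V^\top$ together with a counting or discrepancy argument to find a rectangle $A \times B$ of constant density on which $f$ is heavily biased, say $\Pr_{A \times B}[f=0] \ge 1-\eps$ for a small absolute constant $\eps$. Second, embed this biased structure over $\F_2$: because $A, B$ each contain at most $2^r$ distinct rows/columns, we can index their elements by strings $\tilde u(x), \tilde v(y) \in \F_2^{O(r)}$ and arrange, perhaps after removing a fixed affine shift, that $f(x,y) \equiv \ip{\tilde u(x), \tilde v(y)} \pmod 2$ on $A\times B$.

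With the biased rectangle reformulated as an approximate $\F_2$-bilinear identity, invoke the approximate duality theorem of \cite{zewi2011affine} in its PFR-based form: if $\Pr_{u \in U, v \in V}[\ip{u,v}=0] \ge 1-\eps$ in $\F_2^N$, then there exist subsets $U'\subseteq U, V'\subseteq V$ of relative density $2^{-O(N/\log N)}$ on which $\ip{u',v'}=0$ identically. Plugging in $N=O(r)$ yields the desired monochromatic sub-rectangle of $A\times B$, of density $2^{-O(r/\log r)}$. The main obstacle, in my view, lies in the $\F_2$-embedding of the second stage: the real rank and $\F_2$-rank of $\{0,1\}$ matrices are incomparable in general, so the encoding must be engineered carefully --- one may need to work with the indicator function $\mathbf{1}[f(x,y)=0]$ through its Fourier/character representation rather than with $M_f$ itself, and to absorb low-complexity ``residual'' error into a small number of additional communication bits. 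If that obstacle can be navigated, the two stages combine with the recursive protocol to give the claimed bound.
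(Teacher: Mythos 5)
Your high-level route is exactly the paper's: find a biased rectangle, pass to $\F_2$, apply the PFR-based approximate duality theorem to extract a monochromatic rectangle of density $2^{-O(r/\log r)}$, and convert large monochromatic rectangles into a protocol. But two of your three stages have real problems. Stage 1, as you state it, is false: a constant-density rectangle on which $\Pr[f=0]\ge 1-\eps$ for a small absolute constant $\eps$ need not exist in a rank-$r$ boolean matrix --- the inner product function on $\F_2^n\times\F_2^n$ has rank about $2^n$ and discrepancy $2^{-n/2}$ under the uniform distribution, so every constant-density rectangle in it has bias $o(1)$. What is actually available, and what the paper uses (Theorem~\ref{thm:NW_disc}, due to Nisan and Wigderson), is a rectangle of density only $1/O(r^{3/2})$ with bias only $1/O(r^{3/2})$. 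This weak bias suffices because the PFR-based approximate duality theorem (Theorem~\ref{thm:BLR_approxdual}) applies to any $\eps$-approximate dual pair with $\eps\ge 2^{-\sqrt{r}}$; you are instead quoting the $(1-\eps)$-hypothesis version (in the spirit of Theorem~\ref{thm:BZ}), which is both too weak in its conclusion ($2^{\alpha r}$ density loss, useless here) and the reason you are driven to the unobtainable Stage 1.

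Conversely, the ``main obstacle'' you flag in Stage 2 is not an obstacle. For a $0/1$ matrix one always has $\rank_{\F_2}(M_f)\le\rank(M_f)$ (a $k\times k$ minor that is nonzero mod $2$ is a nonzero integer, hence nonzero over $\R$), so one immediately writes $f(x,y)=\ip{a_x,b_y}$ with $a_x,b_y\in\F_2^r$ and takes $A=\{a_x\}$, $B=\{b_y\}$; no Fourier encoding or error absorption is needed. The ranks are not incomparable in the direction that matters. Finally, your recursion is mis-described: peeling off the monochromatic rectangle and recursing on the residual block gives no ``per-level saving,'' since the residual's rank need not drop and its measure shrinks only by a $2^{-O(r/\log r)}$ fraction, so that accounting never closes. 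The correct tool is the Nisan--Wigderson reduction (Theorem~\ref{thm:NW}): split on whether $x$ lies in the rows of the rectangle, note that one branch roughly halves the rank, bound the number of leaves of the resulting tree, and rebalance; with $c(r)=O(r/\log r)$ this yields $O(\log^2 r+\sum_i c(r/2^i))=O(r/\log r)$, which is the claimed bound.
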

Gavinsky and the author~\cite{GL13:lowrank_equiv} studied the relation between deterministic and randomized protocols for low rank matrices, and showed that in order to prove the log-rank conjecture, it suffices to
prove that any low rank matrix has an efficient randomized protocol. In fact, they show that even weaker notions of protocols are sufficient, like low information cost protocols or efficient zero-communication protocols. We will show here the following result.

\begin{thm}[\cite{GL13:lowrank_equiv}]\label{thm:GL}
If a boolean function $f$ has a randomized protocol of complexity $c$, then it also has a deterministic protocol of complexity $O(c \cdot \log^2(\rank(f)))$.
\end{thm}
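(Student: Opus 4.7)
The plan is to convert the randomized protocol into a deterministic one by iteratively extracting large monochromatic rectangles, using the low rank of $M_f$ to upgrade ``biased'' rectangles (produced by the randomized protocol) into truly monochromatic ones.

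First, I would invoke Yao's minimax principle to obtain the following distributional statement from the randomized protocol of cost $c$: for every distribution $\mu$ on $X \times Y$, there exists a rectangle $R$ with $\mu(R) \geq 2^{-O(c)}$ on which $f$ is $\Omega(1)$-biased with respect to $\mu$. This is the standard ``randomized-to-biased-rectangle'' conversion: fixing the best deterministic protocol for the hard distribution $\mu$, at least one of its $2^c$ leaves is both heavy and biased. Crucially, this step does not use the rank of $f$ at all.

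Second, I would prove the central technical lemma: if $M_f$ has rank $r$ and is $\Omega(1)$-biased on some rectangle $R$ under $\mu$, then $R$ contains a fully monochromatic subrectangle $R' \subseteq R$ with $\mu(R') \geq \mu(R)/\poly(r)$. This is where the low rank does the work. I would extract $R'$ by an iterative refinement: in each of $O(\log r)$ rounds, split $R$ along one of $r$ ``directions'' coming from a rank decomposition of $M_f$ and retain the heavier half, thereby strictly decreasing the discrepancy while losing only a bounded multiplicative factor in $\mu$-mass per round. The procedure converges to a monochromatic region with total multiplicative loss $\poly(r)$.

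Third, I would build the deterministic protocol by recursive peeling. At each internal node the players hold a current subrectangle $A \times B$ with $\rank(M_f|_{A \times B}) \le r$. Setting $\mu$ to be uniform on the $1$-entries of $M_f|_{A\times B}$ and combining the first two steps, they identify a $1$-monochromatic rectangle of relative $\mu$-mass $\geq 2^{-O(c + \log r)}$, communicate $O(c + \log r)$ bits to specify it, and branch on whether the input lies inside (terminating in the ``yes'' case and recursing on the complementary corner rectangles otherwise). Interleaving with a symmetric treatment of the $0$-entries, the overall depth bound follows from a rank-halving analysis: an amortized counting argument shows that after $O(\log r)$ such peeling ``sub-rounds'' the rank of the surviving subrectangle drops by a factor of two, and $O(\log r)$ such halvings suffice to drive the rank down to $1$. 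Multiplying $O(\log^2 r)$ sub-rounds by the per-sub-round cost $O(c + \log r)$ gives the claimed $O(c \log^2 r)$, using the randomized lower bound $c \geq \Omega(\log r)$ to absorb the additive $\log^3 r$ term.

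The main obstacle is Step~2: obtaining only a $\poly(r)$ loss when converting bias into full monochromaticity. A naive rounding argument loses a factor exponential in $r$, so the proof must exploit the bilinear structure of rank-$r$ matrices and carefully control the discrepancy decrease across the $O(\log r)$ refinement rounds. A secondary delicate point is the amortized rank-halving in Step~3, since peeling off individual monochromatic rectangles does not automatically reduce the rank of the remainder; one must aggregate the effect of many peelings to force the rank to decrease.
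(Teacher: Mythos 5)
Your Step~2 is the gap, and it is not repairable within your sketch: the claim that a rank-$r$ matrix with only constant bias on a rectangle must contain a \emph{fully} monochromatic subrectangle after losing only a $\poly(r)$ factor in $\mu$-mass is not something you can get by ``splitting along directions of a rank decomposition and keeping the heavier half'' --- nothing forces the minority-value mass to shrink under such splits, and the statement itself is essentially open. The best known conversion of this type loses a factor $2^{-\Theta(\sqrt{r}\log r)}$, not $1/\poly(r)$ (that is exactly Lemma~\ref{lemma:disc_nearmono} of Section~\ref{sec:disc}), and the general problem of extracting large monochromatic rectangles from biased low-rank matrices is discussed in Section~\ref{sec:approx_dual} as being essentially equivalent to the log-rank conjecture. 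The actual proof sidesteps this by putting the $\log r$ factor into the randomized protocol rather than into the rectangle extraction: first amplify the protocol to error $\eps=1/8r$ at a cost of $c'=O(c\log r)$ bits, then fix the randomness against the uniform distribution and take a heavy, low-error leaf; that leaf is already $(1-1/4r)$-monochromatic. At error rate $O(1/r)$ the upgrade to a truly monochromatic subrectangle costs only a constant factor (Claim~\ref{claim:nearmono_to_mono}): by Markov at least half the rows have at most a $1/2r$ fraction of minority entries, pick $r$ such rows spanning the row space of the restricted matrix, and intersect their majority sets, after which every row of the resulting rectangle is constant. The threshold $1/4r$ is precisely what makes this spanning-rows argument work; constant bias is what makes your version intractable.

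Two smaller points. First, your final accounting invokes $c\ge\Omega(\log \rank(f))$ to absorb an additive $\log^3 r$ term, and this inequality is false: equality on $n$-bit strings has rank $2^n$ but randomized complexity $O(\log n)$. The paper's bookkeeping avoids this because the $\log r$ from amplification sits inside the rectangle-density exponent $c(r)=O(c\log r)$, and Theorem~\ref{thm:NW} then gives $O(\log^2 r+\sum_i c(r/2^i))=O(c\log^2 r)$ directly. Second, your peeling-and-amortized-rank-halving analysis in Step~3 is reinventing Theorem~\ref{thm:NW}; invoking it (noting, as the paper does, that its hypothesis only needs to hold for restrictions of $f$ to rectangles, and that the density bound does not improve as the rank drops) is both cleaner and is where the correct $\log^2 r$ factor comes from.
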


Finally, the author~\cite{lovett2013communication} proved a new (unconditional) upper bound, based on discrepancy of low rank matrices, which improves the previous upper bound by nearly a quadratic factor.

\begin{thm}[\cite{lovett2013communication}]\label{thm:lovett}
For any boolean function $f$,
$$
\CCdet(f)\le O\left(\sqrt{\rank(f)} \cdot \log \rank(f)\right).
$$
\end{thm}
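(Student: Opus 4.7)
The plan is to prove the bound by a recursion that repeatedly halves the rank of $M_f$. Concretely, I would first isolate the following combinatorial/linear-algebraic claim and then derive the theorem from it via a simple protocol:

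\begin{lemma*}
For every boolean matrix $M$ of rank $r \ge 2$, there exists a subset $A \subseteq X$ with $|A|/|X| \ge 2^{-O(\sqrt{r}\log r)}$ such that $\rank(M|_{A\times Y}) \le r/2$.
\end{lemma*}

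Iterating the lemma inside $X\setminus A$, and then inside the next leftover, and so on until $X$ is exhausted, yields a partition $X = X_1 \sqcup \cdots \sqcup X_k$ with $k \le 2^{O(\sqrt{r}\log r)}$ such that $\rank(M|_{X_i\times Y}) \le r/2$ for every $i$.

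The protocol is now immediate and recursive: Alice announces the index $i\in[k]$ with $x \in X_i$ using $\log k = O(\sqrt{r}\log r)$ bits, and the players then recurse on the rank-$(r/2)$ submatrix $M|_{X_i\times Y}$. Writing $T(r)$ for the resulting upper bound on $\CCdet$ of any rank-$r$ boolean function, this gives $T(r) \le O(\sqrt{r}\log r) + T(r/2)$, and since the geometric-type sum
\[
\sum_{i=0}^{\lceil\log_2 r\rceil} \sqrt{r/2^i}\,\log(r/2^i)
\]
is dominated by its first term, we conclude $T(r) = O(\sqrt{r}\log r)$. The base case $r \le 1$ is trivial: a rank-$0$ matrix is constant, and a rank-$1$ boolean matrix is the indicator of a combinatorial rectangle, computable with $O(1)$ bits.

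The hard part, where I expect the real work to lie, is establishing the structural lemma. Since a rank-$r$ boolean matrix has at most $2^r$ distinct rows sitting as vectors in an $r$-dimensional space, the lemma is asking for a dense cluster of rows whose span has only half the dimension. My approach would be discrepancy-based: translating $M$ to $\{\pm 1\}$ and invoking a Linial--Shraibman-type bound, a rank-$r$ sign matrix has discrepancy at least $\Omega(1/\sqrt r)$, which produces a rectangle on which $M$ is noticeably biased. Such a biased rectangle can be ``rounded'' by flipping its minority entries, incurring a rank-$1$ correction while revealing a submatrix in which one coordinate of the row space has effectively been pinned. The delicate task is to iterate this extract-and-round procedure enough to drop the rank by a factor of two, while carefully accounting for the compounding density losses so that they sum (rather than multiply) to only $2^{-O(\sqrt r\log r)}$; getting this accounting right --- exploiting the $\sqrt r$ in the discrepancy bound together with the boolean structure of $M$ to keep the rounding faithful --- is where I expect the main technical difficulty to lie.
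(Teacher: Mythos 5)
Your outer shell is fine, but it is essentially the Nisan--Wigderson reduction that was already known before this result: your rank-halving lemma is interchangeable with the statement ``every rank-$r$ boolean matrix has a monochromatic rectangle of density $2^{-O(\sqrt{r}\log r)}$'' via the standard block decomposition $M=\left(\begin{smallmatrix}R&S\\P&Q\end{smallmatrix}\right)$ (with the caveat that you only get a dense row subset \emph{or} a dense column subset, since only one of $\rank(S),\rank(P)$ is guaranteed to be $\le r/2+1$; and exhausting $X$ by iterating on the leftover costs an extra factor of $\log|X|\le r$ in $k$, which is absorbed). The recursion $T(r)\le O(\sqrt{r}\log r)+T(r/2)$ and its summation are correct. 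So the entire content of the theorem lives in the structural lemma you defer to the end, and that is exactly where your proposal does not go through.

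The concrete gap is this: the Linial--Shraibman bound only hands you a rectangle of constant density on which $f$ has bias $\Omega(1/\sqrt{r})$ --- a $\frac12+\Omega(1/\sqrt r)$ versus $\frac12$ imbalance, nowhere near monochromatic --- and your proposed repair, ``flipping its minority entries, incurring a rank-$1$ correction,'' is false. The minority entries of such a rectangle occupy nearly half of it and form no rectangular pattern; setting a rectangle $A\times B$ to a constant changes $M$ by a matrix supported on $A\times B$ whose rank can be as large as $\rank(M|_{A\times B})$, not $1$. No straightforward iteration of weak-bias extraction with low-rank corrections is known to close this gap. What the actual proof does instead is amplify the discrepancy guarantee directly: by LP duality / the minimax theorem applied to distributions $\sigma$ with $\E_\sigma[f]=0$, one obtains a single distribution $\rho$ over rectangles such that \emph{every} $1$-input lands in a $\rho$-random rectangle with probability $p$ and \emph{every} $(-1)$-input with probability $q\le p-\Omega(\delta)$, where $\delta=\Omega(1/\sqrt r)$; intersecting $t=O((p/\delta)\log(1/\eps))$ independent samples from $\rho$ raises this additive gap to a multiplicative separation $q^t\le \eps\, p^t$ while keeping density $p^t\ge 2^{-O(\delta^{-1}\log(1/\eps))}$, yielding a rectangle of density $2^{-O(\sqrt r\log r)}$ that is $(1-1/4r)$-nearly monochromatic; a final Markov-plus-spanning-rows argument (the paper's Claim on near-monochromatic rectangles) upgrades this to a genuinely monochromatic rectangle at the cost of a constant factor in density. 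Without this amplification step --- the pointwise $p$ versus $q$ separation and the tensoring of rectangles --- your lemma remains unproved, and with it the theorem.
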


The goal of this survey is to explain these recent works, discuss their relations to other fundamental problems in complexity theory, and speculate on what directions seem the most likely to yield further advances for the log-rank conjecture. This is by no means a comprehensive survey. In particular, a related line of research which will not be discussed here is the study of the log-rank conjecture restricted to special families of functions. For example, the case of XOR functions (functions of the form $f(x,y)= F(x \oplus y)$) and related problems has received considerable attention recently~\cite{zhang2009communication,zhang2010parity,lee2010composition,montanaro2009communication,leung2011tight,sun2012randomized,liu2013quantum,
zhang2013efficient,shpilka2013structure}.

\paragraph{Paper organization.} In Section~\ref{sec:nw} we present a result of Nisan and Wigderson which allows to reduce the problem of constructing deterministic protocols to the simpler problem of exhibiting a large monochromatic rectangle. As this result is used repeatedly, we include its proof for completeness. In Section~\ref{sec:approx_dual} we discuss the approximate duality conjecture in additive combinatorics, its relations to the log-rank conjecture and to constructions of two-source extractors. In Section~\ref{sec:rand_det} we show that low-rank functions with efficient randomized protocols also have efficient deterministic protocols. In Section~\ref{sec:disc} we apply bounds on the discrepancy of low-rank functions to deduce better upper bounds on deterministic protocols. In Section~\ref{sec:further} we discuss several directions for further research, including relations to the problem of matrix rigidity.

\section{From monochromatic rectangles to protocols}
\label{sec:nw}
The log-rank conjecture speculates that if $M_f$ has a low rank, then
it can be partitioned into a small number of monochromatic
rectangles. In particular, it must have a large monochromatic
rectangle.  A beautiful reduction of Nisan and
Wigderson~\cite{NW94_On_Ra} shows that if one can prove that any low
rank boolean matrix has a large monochromatic rectangle, then it can
be bootstrapped to design a protocol with nearly the same
efficiency. As this reduction would be useful for us, we review it
below. We recall that a monochromatic rectangle is a subset $R=A
\times B \subset X \times Y$ such that $f(x,y)$ is constant for all
$(x,y) \in R$.

\begin{thm}[\cite{NW94_On_Ra}]\label{thm:NW}
Assume that for any function $f:X \times Y \to \{0,1\}$ with $\rank(f)=r$, there exists a monochromatic rectangle of size $|R| \ge 2^{-c(r)} |X \times Y|$. Then, any boolean function of rank $r$ is computable by a deterministic protocol of complexity $O(\log^2{r}+\sum_{i=0}^{\log{r}} c(r/2^i))$.
\end{thm}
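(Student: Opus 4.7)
The plan is to design a recursive deterministic protocol that halves the rank of the current sub-matrix at each level, using $c(r') + O(\log r')$ bits per level, where $r'$ is the current rank. After $O(\log r)$ levels the rank will reach $1$ and the output will be forced; summing the per-level costs will give the claimed bound $O(\log^2 r + \sum_{i=0}^{\log r} c(r/2^i))$.

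At a generic level, both parties will share a public sub-rectangle $X' \times Y' \subseteq X \times Y$ with $\rank(M|_{X' \times Y'}) = r'$. I would apply the hypothesis to the (rank-$r'$) sub-matrix $M|_{X' \times Y'}$ to obtain a monochromatic rectangle $A \times B \subseteq X' \times Y'$ of relative area at least $2^{-c(r')}$; because $M$ is a public input, both parties can locate $A$ and $B$ without communication. The reduction step is then to narrow $(x,y)$ to a sub-rectangle of rank at most $r'/2$ at a cost of $c(r') + O(\log r')$ bits. First I would have Alice and Bob exchange the two bits $[x \in A]$ and $[y \in B]$, partitioning $X' \times Y'$ into four pieces. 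The piece $A \times B$ is monochromatic, so if $(x,y)$ lands there the output is fixed; otherwise $(x,y)$ lies in one of three off-diagonal pieces, each possibly still of rank $r'$. To force a rank drop, I would iterate inside the surviving off-diagonal piece: reapply the hypothesis to extract another large monochromatic block, peel it off, and recurse, using $O(1)$ additional bits per round to describe which sub-piece $(x,y)$ falls into.

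The hard part will be proving that this peeling procedure forces a rank drop within $O(\log r')$ rounds, so that the aggregated per-level cost stays at $c(r') + O(\log r')$ rather than blowing up to $c(r') \cdot \log r'$. Concretely, I want an inequality tying the ranks of the three off-diagonal pieces to the relative size of the peeled monochromatic block, implying that after $O(\log r')$ peeling rounds either the rank of the surviving piece has dropped below $r'/2$, or its dimensions have contracted so far that a rank drop follows from the structural bound $|X''|,|Y''| \leq 2^{\rank(M|_{X'' \times Y''})}$ for any $\{0,1\}$ sub-matrix. The natural route is a potential argument that tracks the rows and columns removed at each peel and charges them against the area guarantee $2^{-c(r')}$. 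Once this per-level rank-halving is in place, summing $c(r/2^i) + O(\log r)$ over $i = 0, 1, \ldots, \log r$ yields the desired $O(\log^2 r + \sum_{i=0}^{\log r} c(r/2^i))$ bound.
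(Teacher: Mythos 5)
Your outline is stuck precisely at the two spots where the paper's proof has its key ideas, and the gap you explicitly flag (``proving that this peeling procedure forces a rank drop within $O(\log r')$ rounds'') is not true as stated and is not how the original argument closes.

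First, the $4$-way split on $[x\in A]$ and $[y\in B]$ throws away the rank information that makes the argument work. Writing the current matrix as
$M=\bigl(\begin{smallmatrix} R & S\\ P & Q\end{smallmatrix}\bigr)$
with $R=A\times B$ monochromatic, the crucial linear-algebra fact is that $\rank(R)\le 1$ forces $\rank(S)+\rank(P)\le r+1$; hence one of $S,P$ has rank at most $r/2+1$, and it is \emph{that} player (rows or columns, whichever owns the low-rank off-diagonal block) who sends a single bit. One of the two resulting halves (e.g. $(R\,|\,S)$) then has rank $\le r/2+2$ \emph{immediately, after one bit}, and the other half $(P\,|\,Q)$ merely loses a $2^{-c(r)}$ fraction of its area. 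Your $4$-way partition does not isolate either guarantee: $Q$ loses both kinds of rows and columns but can retain full rank, and you have no handle that says which of the three off-diagonal pieces has small rank, so no per-peel rank-vs-area tradeoff of the sort you hope for drops out.

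Second, the rank does \emph{not} drop in $O(\log r')$ peeling rounds, and the proof does not need it to. In the paper's protocol the ``shrink'' branch can be taken $\Theta\bigl(2^{c(r)}\log|X\times Y|\bigr)$ times before the matrix is exhausted. What saves the complexity is not bounding the number of rounds by $O(\log r')$, but counting \emph{leaves}: within one phase (until the rank halves) the protocol tree is essentially a long path with an escape branch at each node, so it has only $O\bigl(2^{c(r)}\log|X\times Y|\bigr)$ leaves, and the standard protocol-balancing lemma (cf.\ Kushilevitz--Nisan, Lemma 2.8) lets one replay the same rectangle partition with communication logarithmic in the leaf count, i.e.\ $O\bigl(c(r)+\log\log|X\times Y|\bigr)=O(c(r)+\log r)$ per phase. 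Your write-up contains no balancing step; without it, $O(1)$ bits per peel over that many peels gives $2^{c(r)}\cdot\mathrm{poly}(r)$ bits, not $c(r)+O(\log r)$. So the missing ideas are exactly (i) the one-sided split justified by $\rank(S)+\rank(P)\le r+1$ and (ii) bounding leaves of the phase subtree and then balancing, rather than bounding the number of rounds.
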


Before giving the proof, we note that if $c(r) = \poly \log(r)$ then Theorem~\ref{thm:NW} implies a protocol with deterministic communication complexity $\poly \log(r)$, hence proving the log-rank conjecture. On the other end of the spectrum, if $c(r) = r^{\alpha}$ for some $\alpha<1$ then Theorem~\ref{thm:NW} implies a protocol with deterministic communication complexity $O(r^{\alpha})$.

\begin{proof}
Let $f$ be a function with $\rank(M_f)=r$, and let $R$ be the assumed monochromatic rectangle of size $2^{-c(r)} \cdot |X \times Y|$. Consider the partition of the matrix $M_f$ as
$$
M_f=\left(
\begin{array}{cc}
R&S\\
P&Q
\end{array}
\right)
$$
As $R$ is monochromatic, $\rank(R) \le 1$. Hence, $\rank(S)+\rank(P) \le r+1$. Assume, without loss of generality, that $\rank(S) \le r/2+1$ (otherwise, exchange the roles of the rows player and columns player). The row player sends one bit, indicating whether the input $x$ is in the top part or in the bottom part of the matrix. If it is in the top part then the rank decreases to $\rank(R \; S) \le \rank(R)+\rank(S) \le r/2+2$. If it is in the bottom part, the rank might not decrease, but the size of the matrix reduces to at most $(1-2^{-c(r)})|X \times Y|$. Iterating this process defines a protocol tree. We next bound the number of leaves of the protocol. By standard techniques, any protocol tree can be balanced so that the communication complexity is logarithmic in the number of leaves (cf.~\cite[Chapter 2, Lemma 2.8]{KN97_Com}).

Consider the protocol which stops once the rank drops to approximately $r/2$. The protocol tree in this case has at most $O(2^{c(r)} \cdot \log(|XY|))$ leaves, and hence can be simulated by a protocol sending only $O(c(r) + \log\log(|XY|))$ bits. Note that since we can assume $f$ has no repeated rows or columns, $|XY| \le 2^{2r}$ and hence $\log\log(|XY|) \le \log(r)+1$. Next, consider the phase where the protocol continues until the rank drops to $r/4$. Again, this protocol can be simulated by $O(c(r/2)+\log(r))$ bits of communication. Summing over $r/2^i$ for $i=0,\ldots,\log(r)$ gives the bound.
\end{proof}

\section{Approximate duality and the log-rank conjecture}
\label{sec:approx_dual}

Nisan and Wigderson~\cite{NW94_On_Ra} proved another interesting fact: any low rank boolean matrix contains a large rectangle which is slightly biased. The bias of $f$ over a rectangle $R$ is defined as
$$
\bias(f|R) = \left| \E_{(x,y) \in R} \left[(-1)^{f(x,y)}\right] \right| = \left| \Pr_{(x,y) \in R}[f(x,y)=0] - \Pr_{(x,y) \in R}[f(x,y)=1] \right|.
$$
We also define $\bias(f)$ to be the bias of $f$ over the full space $X \times Y$. We will later see a generalization of this fact, called discrepancy, which is measured against the worst case distribution
of inputs.

\begin{thm}[\cite{NW94_On_Ra}]\label{thm:NW_disc}
Let $f:X \times Y \to \{0,1\}$ with $\rank(f)=r$. Then there exists a rectangle $R$ of size $|R| \ge |X \times Y| / O(r^{3/2})$ such that $\bias(f|R) \ge 1 / O(r^{3/2})$.
\end{thm}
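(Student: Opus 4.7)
The plan is to analyze the $\pm 1$ sign matrix $A\in\{-1,+1\}^{X\times Y}$ defined by $A_{x,y}=(-1)^{f(x,y)}$, which has rank at most $r+1$. A useful initial reduction: since $|\sum_R A_{x,y}|\le|R|$ for every rectangle $R$, it in fact suffices to exhibit a single rectangle $R$ with $\bigl|\sum_R A_{x,y}\bigr|\ge |X\times Y|/O(r^{3/2})$, because this automatically forces both $|R|\ge |X\times Y|/O(r^{3/2})$ and $\bias(f|R)\ge 1/O(r^{3/2})$.

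The starting point is spectral. Since $\|A\|_F^2 = |X\times Y|$ is a sum of at most $r+1$ squared singular values, the top singular value satisfies $\sigma_1(A)\ge \sqrt{|X\times Y|/(r+1)}$. Fix unit top singular vectors $u,v$ so that $u^T A v = \sigma_1$; the remaining task is to localize this spectral witness onto a combinatorial rectangle.

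I would then decompose $u$ and $v$ by sign and dyadic magnitude, using a geometric ratio $1+\delta$ to be tuned. Entries with $|u_x|$ below a threshold $\tau=1/\poly(r)$ contribute at most $O(\tau\sqrt{|X\times Y|\cdot(r+1)})=o(\sigma_1)$ to $u^T A v$ by Cauchy--Schwarz, and can be discarded. Reducing without loss of generality to $|X|,|Y|\le 2^r$, this leaves $O((\log(r)/\delta)^2)$ bucket rectangles, and pigeonhole produces a single bucket $R_0 = U\times V$ with $|C_{R_0}|:=\bigl|\sum_{(x,y)\in R_0} u_x v_y A_{x,y}\bigr|\ge \sigma_1\delta^2/O(\log^2 r)$. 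On $R_0$ the weight $u_xv_y$ has constant sign and magnitude in a narrow band $[\alpha,(1+\delta)^2\alpha]$; the unit-norm constraint $\|u\|_2=\|v\|_2=1$ forces $|R_0|\le O(1/\alpha^2)$; and converting from the weighted contribution to the unweighted sign sum yields $\bigl|\sum_{R_0} A_{x,y}\bigr|\ge |C_{R_0}|/\alpha - O(\delta)|R_0|$.

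The heart of the argument, and the main obstacle I anticipate, is the joint calibration of $\delta$ and $\alpha$: the pigeonhole loss scales like $1/\delta^2$, the dyadic-conversion error scales like $\delta|R_0|$, and the value of $\alpha$ is dictated by whichever bucket pigeonhole happens to select rather than being at our disposal. To produce exactly $O(r^{3/2})$ without spurious polylogarithmic overhead likely requires a small case split on the magnitude of $\alpha$: when $\alpha$ is ``large,'' the bound $|C_{R_0}|/\alpha$ directly exceeds the target $|X\times Y|/O(r^{3/2})$, whereas when $\alpha$ is ``small,'' the complementary inequality $|R_0|\gtrsim 1/\alpha^2$ together with the trivial identity $|\sum_R A|\le|R|$ provides the size lower bound directly. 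Balancing everything, I expect the optimal choice of $\delta$ to lie around $r^{-1/6}\mathrm{polylog}(r)$, and reducing the problem to the regime $|X|,|Y|\le 2^r$ is what keeps the $\log$ factors under control.
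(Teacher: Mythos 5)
The paper only cites~\cite{NW94_On_Ra} for this theorem and gives no proof, so I am judging your argument on its own terms against the standard one. Your opening reduction is exactly right and is the crux of the real proof: since $\left|\sum_{(x,y)\in R}(-1)^{f(x,y)}\right|\le |R|\le|X\times Y|$, it suffices to exhibit a single rectangle whose unnormalized correlation is at least $|X\times Y|/O(r^{3/2})$; and the spectral starting point $\sigma_1\ge\sqrt{|X\times Y|/(r+1)}$ is correct. The genuine gap is the truncation step. You assert that the entries with $|u_x|<\tau=1/\poly(r)$ contribute at most $O(\tau\sqrt{|X\times Y|(r+1)})$ to $u^TAv$; no application of Cauchy--Schwarz gives a dimension-free bound of this form. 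Writing $u'$ for the small part of $u$, the best available bound is $|u'^TAv|\le\|u'\|_2\,\sigma_1$, and $\|u'\|_2$ need not be small: the top singular vector can be perfectly flat, $|u_x|=|X|^{-1/2}$ for all $x$, in which case with $|X|=2^r$ every entry lies below your threshold and the ``discarded'' part is all of $\sigma_1$. To make the discarded contribution genuinely negligible you are forced down to $\tau\approx 1/\sqrt{|X|\cdot r}$, so $\log(1/\tau)=\Theta(r)$ rather than $\Theta(\log r)$, the number of dyadic buckets becomes $\Theta((r/\delta)^2)$, and the pigeonhole loss alone pushes the final bound well past $r^{3/2}$. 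The closing case split does not rescue this: in the ``small $\alpha$'' branch you obtain a large rectangle but no control on its bias, whereas your own reduction needs the unnormalized correlation itself to be large.

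The fix --- which is essentially the route of~\cite{NW94_On_Ra} --- is to drop dyadic bucketing entirely. Use the layer-cake identity $|u_x|=\int_0^\infty 1[|u_x|\ge t]\,dt$ on each signed part of $u$ and $v$, so that $u^TAv$ becomes a signed integral of $\sum_{R_{t,t'}}A$ over level-set rectangles, one of the four sign patterns contributing at least $\sigma_1/4$. Then truncate the \emph{large} entries rather than the small ones: capping $|u_x|$ at $T=\Theta(\sqrt{(r+1)/|X|})$ changes $u^TAv$ by at most $\|v\|_1/T\le\sigma_1/16$, because $|\{x:|u_x|\ge t\}|\le 1/t^2$ gives $\sum_x\max(|u_x|-T,0)\le 1/T$; similarly cap $v$ at $T'=\Theta(\sqrt{(r+1)/|Y|})$. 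The integral now runs over a box of measure $TT'=O(r/\sqrt{|X||Y|})$ while still totalling $\Omega(\sigma_1)$, so some single level-set rectangle satisfies $\sum_{R_{t,t'}}(-1)^{f}\ge\Omega(\sigma_1/(TT'))=|X\times Y|/O(r^{3/2})$, and your reduction finishes the proof. Note that small entries never need to be discarded --- they merely enlarge the level sets at small thresholds --- and no logarithmic factors or calibration of $\delta$ appear.
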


Let us restrict $f$ to the rectangle $R$ so that we may assume for simplicity $\bias(f) \ge \eps = 1/O(r^{3/2})$. Thus, we may ask whether it is easier to study the structure of low rank matrices, if we further
assume that they are somewhat biased. Recall that Theorem~\ref{thm:NW} requires us to find a large monochromatic rectangle. This raises the following problem.

\begin{prob}
Let $f$ be a boolean function such that $\rank(f)=r$ and $\bias(f) \ge \eps$. What is the largest monochromatic rectangle that $M_f$ must contain?
\end{prob}

The previous discussion shows that this problem is essentially equivalent to the log-rank conjecture, as long as the bias is inverse polynomially related to the rank. The main idea of Ben-Sasson et al.~\cite{BLR12_An_Ad} is to consider a related problem, where instead of considering the matrices over the reals, we consider them over the binary finite field $\F_2$. In the following, we denote by $\rank_{\F_2}(M_f)$ the rank of a matrix over $\F_2$; note that the rank over $\F_2$ is always at most the rank over the reals, e.g. $\rank_{\F_2}(M_f) \le \rank(M_f)$.

\paragraph{Approximate duality.}
We now introduce a seemingly unrelated problem. Let $A,B \subset \F_2^r$ be subsets. The \emph{approximate duality} measure of $A,B$ is
$$
\eps = \left| \E_{a \in A, b \in B}[(-1)^{\ip{a,b}}] \right| = \left| \Pr_{a \in A, b \in B}[\ip{a,b}=0] - \Pr_{a \in A, b \in B}[\ip{a,b}=1] \right|.
$$
We say the sets are $\eps$-approximate dual if their approximate duality measure is at least $\eps$. Note that $\eps=1$ corresponds to sets which are orthogonal (possibly after applying an affine shift to one of the sets). The \emph{approximate duality conjecture} of Ben-Sasson and Ron-Zewi~\cite{zewi2011affine} speculates that any large sets which are approximate dual, must contain large subsets which are dual.

\begin{conj}[Approximate duality conjecture~\cite{zewi2011affine}]\label{conj:approx_dual}
Let $A,B \subset \F_2^r$ be sets which are $\eps$-approximate dual. Then there exist subsets $A' \subset A, B' \subset B$ and a value $c \in \F_2$ such that
$$
\ip{a,b}=c \qquad \forall a \in A', b \in B',
$$
where
$$
\frac{|A|}{|A'|}, \frac{|B|}{|B'|} \le 2^{O\left(\sqrt{r \log(1/\eps)}\right)}.
$$
\end{conj}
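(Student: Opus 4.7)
The plan is to derive the conjecture from the Polynomial Freiman--Ruzsa (PFR) conjecture over $\F_2$, which the paper already invokes as a hypothesis for its main application (Theorem~\ref{thm:BLR_cc}). The strategy is to translate the \emph{statistical} near-duality provided by the bias $\eps$ into the \emph{exact} duality required by the conclusion, using additive combinatorics as the bridge between them.

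\textbf{Step 1: Extract additive structure from the bias.} After possibly shifting $B$ by a single vector, assume $\Pr_{a,b}[\ip{a,b}=0]\ge \tfrac12+\tfrac{\eps}{2}$. Apply Cauchy--Schwarz (or a Gowers-type inequality) to lift this single-inner-product bias to an estimate on sumsets: for a suitable integer $k$,
$$
\E_{a_1,\ldots,a_{2^k}\in A,\, b_1,\ldots,b_{2^k}\in B}\bigl[(-1)^{\langle\sum_i a_i,\,\sum_j b_j\rangle}\bigr]\;\ge\; f(\eps,k),
$$
so sumsets of $A$ remain biased toward orthogonality to sumsets of $B$. A Balog--Szemer\'edi--Gowers type argument then converts this statistical structure into \emph{combinatorial} structure: large subsets $A_1\subset A,\,B_1\subset B$ with bounded doubling $|A_1+A_1|\le K|A_1|$ and $|B_1+B_1|\le K|B_1|$, where $\log K$ is polynomial in $k$ and $\log(1/\eps)$.

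\textbf{Step 2: Apply PFR to embed in low-dimensional affine subspaces.} PFR over $\F_2$ then yields affine subspaces $H_A,H_B\subset\F_2^r$ of sizes $\poly(K)\cdot|A_1|$ and $\poly(K)\cdot|B_1|$ respectively, with $A_1\subset H_A$ and $B_1\subset H_B$. A sizable fraction of $A$ therefore sits inside an affine subspace of dimension $d_A\ll r$, and likewise a sizable fraction of $B$ sits inside one of dimension $d_B\ll r$.

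\textbf{Step 3: Reduce the rank, recurse, and balance parameters.} The bilinear pairing restricted to $H_A\times H_B$ has rank at most $\min(d_A,d_B)$, strictly less than $r$. Apply the argument recursively on this smaller problem (with mildly degraded bias); once the rank is small enough, finish by a pigeonhole over the at-most-$2^{d_A}$ cosets of the orthogonal complement of $A_1$ inside $H_B$, producing $A'\subset A_1,\,B'\subset B_1$ on which $\ip{a,b}$ is constant. The main obstacle is the parameter tracking. Step~1 costs in the density of $A_1,B_1$ inside $A,B$ (in $k$ and $\log(1/\eps)$); PFR costs a factor $\poly(K)$; and the recursion multiplies losses across $\log r$ levels. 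Matching the target bound $2^{O(\sqrt{r\log(1/\eps)})}$ requires the familiar $\min_k(ak+b/k)$ balancing: $k$ must be chosen so that the bias-amplification loss $f(\eps,k)$ and the PFR-driven dimension drop per round conspire to yield a square-root saving in the exponent. Ensuring that each recursive step genuinely reduces the effective rank rather than merely the ambient dimension, and that the multiplicative losses accumulate only additively in the exponent across all levels, is the technical heart of the argument and the point at which PFR alone may turn out to be insufficient.
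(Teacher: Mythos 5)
This statement is a \emph{conjecture}; the paper offers no proof of it and explicitly says ``we do not know if Conjecture~\ref{conj:approx_dual} is true or not.'' So there is nothing to match your argument against, and the real question is whether your sketch closes the open problem. It does not. Your outline is essentially the strategy of Ben-Sasson, Ron-Zewi and Lovett~\cite{BLR12_An_Ad}, which is exactly what the survey records as Theorem~\ref{thm:BLR_approxdual}: assuming PFR, that route yields only $|A|/|A'| \le 2^{O(r/\log r)}$, and even then only for $\eps \ge 2^{-\sqrt{r}}$. That is exponentially weaker than the conjectured $2^{O(\sqrt{r\log(1/\eps)})}$, and the gap between the two is precisely why the conjecture is open.

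The concrete failure point is Step 2. PFR gives you an affine subspace $H_A$ with $|H_A| \le |A_1|$ containing a $K^{-O(1)}$ fraction of $A_1$, so it controls $\dim H_A$ relative to $\log_2|A_1|$, not relative to $r$. When $A$ is dense in $\F_2^r$ (the interesting regime), $\dim H_A$ can be $r - O(1)$, so the ``rank'' of the restricted pairing barely drops, and the recursion in Step 3 makes only marginal progress per round; summed over all rounds this is how one arrives at $r/\log r$ rather than $\sqrt{r\log(1/\eps)}$ in the exponent. Your closing sentence concedes that the parameter balancing is ``the point at which PFR alone may turn out to be insufficient'' --- that concession is the entire content of the open problem, so the proposal should be read as a (correct) account of the known partial result, not as a proof of the statement. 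Note also that the Hamming-weight example in the paper shows the conjectured bound is tight, so any proof must exactly saturate the $\min_k$ trade-off you describe; no known amplification/BSG/PFR pipeline achieves this.
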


The bound in Conjecture~\ref{conj:approx_dual}, if true, is the best possible, as the following example shows. Let $A=B$ be the set of all vectors in $\F_2^r$ of hamming weight $\sqrt{r}/10$. Then
the probability that a uniformly chosen $a \in A, b \in B$ intersect is at most $1/100$, and hence $A,B$ are $\eps$-approximate dual for $\eps \ge 0.98$. On the other hand, the largest subsets $A' \subset A, B' \subset B$
which are orthogonal come from choosing $A' = A \cap (\{0,1\}^{r/2} \times 0^{r/2})$ to be the set of vectors supported on the first half of the coordinates, and $B' = B \cap (0^{r/2} \times \{0,1\}^{r/2})$ to
be the vectors supported on the last half of the coordinates. One can then verify that $|A|/|A'|=|B|/|B'| = \exp(\Omega(\sqrt{r}))$. The bound for general $\eps>0$ can be similarly obtained,
by considering $A=B$ to be the vectors in $\F_2^r$ of hamming weight $O(\sqrt{r \log(1/\eps)})$.

\paragraph{Approximate duality and the log-rank conjecture.}
Let us now relate the approximate duality conjecture with the log-rank conjecture. By Theorem~\ref{thm:NW_disc}, if $\rank(M_f)=r$ (where the rank is over the reals) we may assume (by potentially restricting $f$ to
a large rectangle) that $\bias(f) \ge \eps = 1/O(r^{3/2})$. Moreover, $\rank_{\F_2}(f) \le \rank(f) = r$. Equivalently put, there are vectors $a_x,b_y \in \F_2^r$ such that
$$
\ip{a_x,b_y} = f(x,y).
$$
Let us define $A=\{a_x: x \in X\}, B=\{b_y: y \in Y\}$. Then by definition, since $\bias(f) \ge \eps$, the sets $A,B$ are $\eps$-approximate dual. Then, by the approximate duality conjecture, there are large subsets $A' \subset A, B' \subset B$ such that $\ip{a,b}$ is constant for all $a \in A', b \in B'$. That is, the rectangle $A' \times B'$ is monochromatic! Working out the parameters, the approximate duality conjecture implies that
$M_f$ contains a monochromatic rectangle $R$ of size $|R| \ge \exp(-O(\sqrt{r \log(r)})) |X \times Y|$. As this holds for any matrix of rank $r$, Theorem~\ref{thm:NW} implies that $f$ has a deterministic protocol of complexity at most $O(\sqrt{r \log(r)})$. Thus, we obtain the following corollary.

\begin{cor}
If Conjecture~\ref{conj:approx_dual} is true, then any boolean function $f$ with $\rank(f)=r$ has a deterministic protocol of complexity $O(\sqrt{r \log(r)})$.
\end{cor}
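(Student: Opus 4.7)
The plan is to chain together Theorem~\ref{thm:NW_disc} (large biased rectangle), the approximate duality conjecture (Conjecture~\ref{conj:approx_dual}), and Nisan--Wigderson's rectangles-to-protocols reduction (Theorem~\ref{thm:NW}). Specifically, I would verify the hypothesis of Theorem~\ref{thm:NW} with $c(r) = O(\sqrt{r \log r})$: any boolean matrix of rank $r$ contains a monochromatic rectangle covering a $2^{-O(\sqrt{r \log r})}$ fraction of the matrix. Plugging this into Theorem~\ref{thm:NW} gives a protocol of complexity $O(\log^2 r) + \sum_{i=0}^{\log r} O\!\left(\sqrt{(r/2^i) \log(r/2^i)}\right)$, which is a geometric series dominated by its first term and thus sums to $O(\sqrt{r \log r})$.

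To produce the monochromatic rectangle, I would first apply Theorem~\ref{thm:NW_disc} to obtain a rectangle $R_0 \subseteq X \times Y$ with $|R_0| \geq |X \times Y|/O(r^{3/2})$ and $\bias(f|R_0) \geq \eps := 1/O(r^{3/2})$. Restricting attention to $R_0$, the matrix still has $\F_2$-rank at most $r$ (using $\rank_{\F_2}(M_f) \le \rank(M_f)$), so there exist vectors $a_x, b_y \in \F_2^r$ with $f(x,y) = \ip{a_x, b_y}$ for $(x,y) \in R_0$. Setting $A = \{a_x : x\}$ and $B = \{b_y : y\}$, the bias bound translates exactly into the statement that $A$ and $B$ are $\eps$-approximate dual.

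Now I apply Conjecture~\ref{conj:approx_dual}, yielding subsets $A' \subset A$, $B' \subset B$ on which $\ip{a,b}$ is constant, with $|A|/|A'|, |B|/|B'| \le 2^{O(\sqrt{r \log(1/\eps)})} = 2^{O(\sqrt{r \log r})}$ (since $\log(1/\eps) = O(\log r)$). The corresponding sub-rectangle of $R_0$ is monochromatic for $f$, and combining the two density losses gives a monochromatic rectangle of fractional size at least $1/O(r^{3/2}) \cdot 2^{-O(\sqrt{r \log r})} = 2^{-O(\sqrt{r \log r})}$, as required.

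The main (and essentially only) conceptual step is the translation between the real-rank condition on $M_f$ and the $\F_2$-rank decomposition needed to invoke approximate duality; this is what forces the factorization $f(x,y)=\ip{a_x,b_y}$ over $\F_2^r$ and lets the bias become an approximate duality measure. Once that bridge is crossed, each subsequent step (restricting to a biased rectangle, pulling out an orthogonal sub-pair, feeding into Theorem~\ref{thm:NW}) is essentially mechanical, and the only quantitative check is that the geometric sum over the recursion depth in Theorem~\ref{thm:NW} preserves the $\sqrt{r \log r}$ bound.
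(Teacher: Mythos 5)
Your proposal is correct and follows essentially the same route the paper sketches: restrict via Theorem~\ref{thm:NW_disc} to a large $\eps$-biased rectangle with $\eps = 1/O(r^{3/2})$, view that sub-matrix as an $\F_2$-inner-product matrix of dimension at most $r$, invoke Conjecture~\ref{conj:approx_dual} to extract the monochromatic sub-rectangle with density loss $2^{-O(\sqrt{r\log r})}$, and feed the resulting $c(r)=O(\sqrt{r\log r})$ into Theorem~\ref{thm:NW}. Your added observation that the recursion sum $\sum_i c(r/2^i)$ is geometric and hence still $O(\sqrt{r\log r})$ is a correct verification of a step the paper leaves implicit.
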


Of course, we do not know if Conjecture~\ref{conj:approx_dual} is true or not. Ben-Sasson and Ron-Zewi proved the following weak version of it, which has no direct implication for the log-rank conjecture.

\begin{thm}[\cite{zewi2011affine}]\label{thm:BZ}
For any $\alpha>0$ there exist $\eps>0$ such that the following holds. Let $A,B \subset \F_2^r$ be sets which are $(1-\eps)$-approximate dual. Then there exist subsets $A' \subset A, B' \subset B$ and a value $c \in \F_2$ such that
$$
\ip{a,b}=c \qquad \forall a \in A', b \in B',
$$
where
$$
\frac{|A|}{|A'|}, \frac{|B|}{|B'|} \le 2^{\alpha r}.
$$
\end{thm}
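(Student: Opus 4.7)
The plan is to combine a Markov-type cleaning, dependent random choice, and a pigeonhole over cosets of an auxiliary subspace. WLOG, after passing to the majority value, $\Pr_{a\in A,\,b\in B}[\langle a,b\rangle=1]\le\eps/2$; I will find a rectangle with $c=0$, the other sign being symmetric. By Markov's inequality, the set $A_1 := \{a\in A : \Pr_{b\in B}[\langle a,b\rangle=1]\le\sqrt{\eps}\}$ has cardinality at least $(1-\sqrt{\eps}/2)|A|$.

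Choose $t := \lceil(1-\alpha)r\rceil$ elements $b_1,\ldots,b_t\in B$ uniformly and independently at random, and set
\[
V := \{x\in\F_2^r : \langle x,b_i\rangle=0\ \forall i\}, \qquad A' := A_1\cap V.
\]
By independence and the cleaning, for each $a\in A_1$ one has $\Pr[a\in A']=(1-\Pr_{b\in B}[\langle a,b\rangle=1])^t\ge(1-\sqrt{\eps})^t$, so $\E|A'|\ge(1-\sqrt{\eps})^t|A_1|$. Now $V^\perp := \mathrm{span}(b_1,\ldots,b_t)$ has dimension at most $t$, and for any $a\in V$ the functional $\langle a,\cdot\rangle$ is constant along cosets of $V^\perp$ in $\F_2^r$. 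Pigeonholing over these $2^{r-\dim V^\perp}$ cosets yields some coset $b_0+V^\perp$ containing a subset $B_0\subset B$ with $|B_0|\ge|B|\cdot 2^{\dim V^\perp - r}$; assuming $\dim\mathrm{span}(B)\ge t$, one can arrange $\dim V^\perp=t$ (with positive probability for the random $b_i$'s, or deterministically by a greedy selection), giving $|B_0|\ge|B|/2^{\alpha r}$. Finally, partition $A'$ by the bit $\langle\cdot,b_0\rangle\in\{0,1\}$ and retain the larger half $A''$; then $\langle a,b\rangle$ is constant on $A''\times B_0$. Taking $\eps:=c\alpha^2$ for a sufficiently small absolute constant $c$ yields $(1-\sqrt{\eps})^t\ge 2^{-\alpha r/2}$, so $|A''|/|A|\ge 2^{-\alpha r}$ (for $r$ sufficiently large, with small $r$ handled separately), and $|B_0|/|B|\ge 2^{-\alpha r}$, as required.

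The main obstacle is the case $\dim\mathrm{span}(B)<(1-\alpha)r$, where $V^\perp$ cannot reach dimension $t$ and the coset pigeonhole becomes too weak. The natural remedy is to first quotient $\F_2^r$ by $\mathrm{span}(B)^\perp$—the inner product, and hence the bias, descends canonically to $\F_2^r/\mathrm{span}(B)^\perp$, an instance of the theorem in smaller ambient dimension—and then induct on $r$. A symmetric reduction handles the case of low-dimensional $\mathrm{span}(A)$. Careful bookkeeping through the induction is needed so that the constant $c$ in $\eps=c\alpha^2$ does not deteriorate across the reduction, but since we only need the existence of \emph{some} $\eps=\eps(\alpha)>0$, there is ample room.
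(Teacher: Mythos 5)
Your outline is a reasonable single-round dependent-random-choice argument, but it has a genuine gap at the one place where all known proofs of this theorem have to work hard: you need the \emph{same} random choice $b_1,\ldots,b_t$ to simultaneously (i) keep $|A_1\cap V|$ at least roughly $(1-\sqrt{\eps})^{t}|A_1|$, and (ii) make $B$ concentrate on few cosets of $\mathrm{span}(b_1,\ldots,b_t)$ (in your write-up, via $\dim\mathrm{span}(b_1,\ldots,b_t)\ge (1-\alpha)r$). Requirement (i) only holds \emph{in expectation}, and the expectation is exponentially small in $r$, namely $2^{-\Theta(\sqrt{\eps}\,r)}$; requirement (ii) can fail with probability close to $1$. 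Your two proposed fixes do not close this. ``Positive probability'' is unsubstantiated: the events are not shown to co-occur, and a reverse-Markov argument for (i) only gives success rate about $(1-\sqrt{\eps})^{t}$, which is dominated by any constant or polynomially small failure probability of (ii). ``Greedy selection'' of linearly independent $b_i$'s destroys the guarantee $\Pr[a\in V]\ge(1-\sqrt{\eps})^{t}$, since the vectors needed to increase the dimension may be exactly those that kill most of $A_1$. Finally, the quotient by $\mathrm{span}(B)^\perp$ only handles the case where $B$ \emph{literally} fails to span; the problematic case is $\mathrm{span}(B)=\F_2^r$ but with, say, a $1-2^{-\sqrt{r}}$ fraction of $B$ lying in a subspace $U$ of dimension $r/2$. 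Then with overwhelming probability all samples land in $U$, so $\dim\mathrm{span}(b_i)\le r/2$, the coset pigeonhole only yields $|B_0|\ge 2^{-r/2}|B|$, and the quotient reduction is vacuous. (There is also a secondary bookkeeping issue: after quotienting, relative density of a subset of the image of $A$ does not transfer to relative density in $A$ itself without controlling fiber sizes, so the induction would have to be phrased for multisets or distributions.)

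The underlying tension is quantitative and not an artifact of your presentation: one can show $\E[|B\setminus\mathrm{span}(b_1,\ldots,b_t)|]\le \tfrac{r}{t+1}|B|$, so the ``$B$-side'' failure probability decays only like $r/t$, while the ``$A$-side'' success rate decays like $(1-\sqrt{\eps})^{t}$; no choice of $t$ reconciles the two in a single round for fixed $\eps>0$ and large $r$. This is precisely why the actual proof in the cited work of Ben-Sasson and Ron-Zewi is an \emph{iterative} argument (repeatedly passing to dominant cosets/subspaces and re-running the sampling at each scale, with losses tracked across $O(1/\alpha)$-type rounds), and why the quantitatively stronger versions require additional machinery such as the polynomial Freiman--Ruzsa conjecture or Sanders' theorem. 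Your single application of dependent random choice plus one pigeonhole is the correct first step, but the theorem's content lives in the iteration you have deferred to ``careful bookkeeping''; as written, the proof is incomplete at its central step.
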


Ben-Sasson, Ron-Zewi and the author~\cite{BLR12_An_Ad} proved a slightly stronger version, assuming a number-theoretic conjecture known as the polynomial Freiman-Ruzsa conjecture. This conjecture can be defined over arbitrary Abelian groups, but we only need it for the additive group $\F_2^n$.
\begin{conj}[The polynomial Freiman-Ruzsa conjecture over $\F_2^n$]\label{conj:PFR}
Let $A \subset \F_2^n$ be a set, and let $A+A=\{a_1+a_2: a_1,a_2 \in A\}$ be its sumset. If $|A+A| \le K|A|$ then there exists an affine subspace $V \subset \F_2^n$
of size $|V| \le |A|$ such that
$$
|A \cap V| \ge K^{-O(1)} |A|.
$$
\end{conj}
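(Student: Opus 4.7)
The plan is to attack the polynomial Freiman-Ruzsa conjecture through the classical Freiman-Ruzsa framework and then try to sharpen the quantitative loss to be polynomial in $K$. Starting from $|A+A| \le K|A|$, the Pl\"unnecke-Ruzsa inequalities give $|nA - mA| \le K^{n+m} |A|$, so iterated sumsets remain under polynomial control. I would then invoke Bogolyubov's lemma in its sharpest known form (due to Sanders): the iterated sumset $2A - 2A$ contains an affine subspace $V$ of codimension at most $\log^{O(1)} K$. Combining this with Ruzsa's covering lemma, which lets us write $A \subseteq X + (2A - 2A)$ for some $X$ with $|X| \le K^{O(1)}$, pigeonholing over the cosets $x + V$ produces a single translate that intersects $A$ in a $K^{-O(1)}$ fraction. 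Passing to an affine subspace of the correct size would then yield the conclusion, but only with an $\exp(\log^{O(1)} K)$ loss in density — the quasi-polynomial bound.

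To go from quasi-polynomial to genuinely polynomial dependence, I would leave the sumset world and move to the entropy framework. Define the entropic Ruzsa distance $d[X;Y] = H(X+Y) - \tfrac{1}{2}(H(X)+H(Y))$ for independent $\F_2^n$-valued random variables, and reformulate PFR as follows: if $d[X;X]$ is bounded by $\log K$, then $X$ is close in total-variation distance to the uniform distribution on a coset of some subspace $V$ with $H(V) \approx H(X)$. The proof strategy would be to consider a pair $(X,Y)$ minimizing a carefully chosen functional such as $d[X;Y] + \eta\bigl(d[X;X'] + d[Y;Y']\bigr)$ over all couplings with prescribed marginals, and show via a fibering/decoupling identity that the minimizer is forced to be essentially uniform on a coset. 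Translating back to sets through a standard equivalence between entropy inequalities and sumset inequalities would then yield the desired polynomial conclusion.

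The hard part will be executing the entropy compression step with polynomial (rather than logarithmic) rate. Natural moves — such as replacing $X$ by $X + X'$ for an independent copy $X'$, or conditioning on a well-chosen linear function of $X$ — typically reduce the target entropy functional only by an additive amount, so naive iteration accumulates logarithmic factors and lands back at quasi-polynomial bounds. Breaking this barrier seems to require a sharper identity in which strict monotonicity of the functional is quantified by the distance of $X$ from being uniform on a coset, so that proximity to the extremal case drives the improvement geometrically rather than additively. A secondary obstacle is the final translation step: converting a random variable that is $\delta$-close to uniform on a coset into a set $A$ with a large intersection with an affine subspace of size at most $|A|$ costs polynomial factors in $1/\delta$, and these must be absorbed into the $K^{O(1)}$ bound rather than degrade the exponent.
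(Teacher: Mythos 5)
This statement is presented in the paper as an open conjecture --- the paper offers no proof of it, and explicitly notes that only a quasi-polynomial analogue (due to Sanders) was known at the time of writing. So there is no proof of record to compare against, and your submission is not a proof either: it is a research programme. The first half of your plan (Pl\"unnecke--Ruzsa, Sanders' Bogolyubov-type lemma, Ruzsa covering, pigeonholing over cosets) is a correct account of the known argument, but as you yourself observe it only delivers density $\exp(-\log^{O(1)} K)$, i.e.\ the quasi-polynomial bound, which is strictly weaker than the conjectured $K^{-O(1)}$.

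The second half is where the actual content would have to live, and it is missing. You propose minimizing an entropic functional of the form $d[X;Y]+\eta\bigl(d[X;X']+d[Y;Y']\bigr)$ and extracting near-uniformity on a coset from the minimizer via a ``fibering/decoupling identity,'' but you do not state that identity, do not prove that a minimizer exists in a usable class, and do not show that the minimizer's structure implies the polynomial density bound. You candidly flag that the natural entropy-decrement moves only give additive progress and hence re-derive quasi-polynomial bounds, and that one needs ``a sharper identity in which strict monotonicity of the functional is quantified by the distance of $X$ from being uniform on a coset'' --- but that sharper identity \emph{is} the theorem; naming the shape of the missing lemma is not the same as supplying it. Likewise the final transfer from an entropic statement about a random variable to the set statement with $|V|\le |A|$ is asserted to cost only polynomial factors without an argument. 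As it stands the proposal establishes nothing beyond what is already cited in the paper, and the gap is precisely the entire core of the conjecture.
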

The polynomial Freiman-Ruzsa conjecture is one of the fundamental open problems in additive combinatorics, see e.g.~\cite{green2004finite} for a discussion of the conjecture. A quasi-polynomial analog of it was proved by Sanders~\cite{sanders2010bogolyubov}, see also~\cite{lovett2012exposition} for an exposition. If one assumes Conjecture~\ref{conj:PFR} to hold, Ben-Sasson et al~\cite{BLR12_An_Ad} proved an improved bound on the approximate duality conjecture.

\begin{thm}[\cite{BLR12_An_Ad}]\label{thm:BLR_approxdual}
Assume that the polynomial Freiman-Ruzsa conjecture over $\F_2^n$ (Conjecture~\ref{conj:PFR}) is true. Let $A,B \subset \F_2^r$ be sets which are $\eps$-approximate dual for $\eps \ge 2^{-\sqrt{r}}$. Then there exist subsets $A' \subset A, B' \subset B$ and a value $c \in \F_2$ such that
$$
\ip{a,b}=c \qquad \forall a \in A', b \in B',
$$
where
$$
\frac{|A|}{|A'|}, \frac{|B|}{|B'|} \le 2^{O(r/\log(r))}.
$$
\end{thm}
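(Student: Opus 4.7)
The plan is to follow the same general spectral strategy as the proof of the weak approximate-duality theorem (Theorem~\ref{thm:BZ}), but to invest the hypothesis Conjecture~\ref{conj:PFR} at the main structural step in order to upgrade its quantitative loss from exponential to polynomial in the sumset-doubling constant.

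\textbf{Step 1 (spectral setup).} I would first rewrite the bias condition as $|\E_{b \in B}\hat{\mu_A}(b)| \ge \eps$, where $\mu_A$ denotes the uniform distribution on $A$ and $\hat{\mu_A}(b) = \E_{a \in A}[(-1)^{\ip{a,b}}]$. Parseval gives $\sum_b \hat{\mu_A}(b)^2 = 2^r/|A|$, so a Cauchy--Schwarz / Markov argument extracts a subset $B_1 \subset B$ of relative density $\Omega(\eps^2)$ contained in the spectrum $\mathrm{Spec}_{\Omega(\eps)}(A) := \{b : |\hat{\mu_A}(b)| \ge \Omega(\eps)\}$. The same Parseval identity yields the density bound $|A|\cdot|B| \le 2^r/\eps^2 \le 2^{r+2\sqrt r}$ in our regime $\eps \ge 2^{-\sqrt r}$.

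\textbf{Step 2 (spectrum structure under PFR).} The heart of the argument is to show, assuming Conjecture~\ref{conj:PFR}, that a large subset of $\mathrm{Spec}_{\Omega(\eps)}(A)$ lies inside a single affine subspace of $\F_2^r$ of codimension $O(r/\log r)$. Unconditionally, Chang's lemma only yields codimension of order $\log(2^r/|A|)/\eps^2$, which is much larger than $r$ when $\eps$ is only polynomially small. The spectrum carries large additive energy; a Balog--Szemer\'edi--Gowers extraction produces a subset with small doubling, and PFR on this subset gives an affine subspace of small codimension containing a dense portion of it. Because PFR furnishes polynomial (rather than exponential) dependence on the doubling constant, optimizing the doubling constant against the codimension yields the $O(r/\log r)$ codimension advertised in the conclusion.

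\textbf{Step 3 (iterate on the other side and pigeonhole).} Pigeonholing over cosets places a set $B' \subset B$ of size $|B|\cdot 2^{-O(r/\log r)}$ inside a single coset $v_B + W$ of a subspace $W \subset \F_2^r$. The approximate-duality condition restricted to $A \times B'$ descends, via $\ip{a, v_B + w} = \ip{a, v_B} + \ip{a, w}$, to an approximate-duality instance between the image of $A$ in $\F_2^r/W^\perp$ and the translated $B'$. Rerunning Steps~1--2 on this smaller instance produces a coset $v_A + U$ with $U \supseteq W^\perp$ containing a subset $A_1 \subset A$ of size $|A|\cdot 2^{-O(r/\log r)}$, on which $\ip{a,b}$ takes only $O(1)$ distinct values over $(a,b) \in A_1 \times B'$. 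A final pigeonhole on the value $c \in \F_2$ refines $A_1$ to $A'$ and completes the proof.

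\textbf{Main obstacle.} The crux, and the sole place where PFR is used, is Step~2: upgrading Chang's lemma to a structural statement with polynomial (rather than exponential) dependence on $1/\eps$. Without PFR one is stuck with Chang-type bounds whose codimension is $\Omega(r)$, which yield no saving over the trivial protocol. The condition $\eps \ge 2^{-\sqrt r}$ arises precisely so that the density constraint $|A|\cdot|B| \le 2^{r+2\sqrt r}$ from Step~1 remains compatible with placing $A'$ and $B'$ simultaneously into affine subspaces whose dimensions sum to at most $r + O(r/\log r)$; any weaker bias would make this dimensional budget infeasible.
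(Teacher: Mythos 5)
The survey does not actually prove Theorem~\ref{thm:BLR_approxdual}; it states the result and cites~\cite{BLR12_An_Ad}, so there is no in-paper proof to compare against. I can therefore only assess your sketch on its own. Its broad flavor---bias to spectrum, additive structure of the spectrum, PFR for a polynomial-in-$K$ subspace structure, then restrict/quotient and recurse---is indeed the general territory the argument in~\cite{BLR12_An_Ad} lives in, so the framing is reasonable.

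The problem is that the step where all the quantitative content must come from, your Step~2, is asserted rather than argued, and as stated I do not believe it holds. Balog--Szemer\'edi--Gowers followed by a single PFR application gives a coset of some subspace $V$ containing a $\mathrm{poly}(1/K)$ fraction of the BSG piece of the spectrum, where $K$ is the extracted doubling constant; nothing in that pipeline pins the codimension of $V$ to $O(r/\log r)$, and neither $K$ nor the BSG density loss is shown to be $2^{O(r/\log r)}$. The phrase ``optimizing the doubling constant against the codimension'' is doing the entire job without any inequality to back it up, and a $\log r$ saving in the exponent is precisely the kind of gain that in these arguments comes from iterating $\Theta(\log r)$ (or more) dimension-reduction rounds with PFR used at each round to keep the per-round loss polynomial in $r$---not from a single structural extraction. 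Your Step~3 performs only one further round, so even granting Step~2 the overall loss would not telescope to $2^{O(r/\log r)}$. Relatedly, your account of the hypothesis $\eps\ge 2^{-\sqrt r}$ (as a density-budget constraint forcing $A'$ and $B'$ into complementary subspaces) does not obviously match how that bound would actually be consumed; in an iterative argument one typically needs $\eps$ not too small so that the bias survives repeated restriction and projection. In short: Steps~1 and the high-level shape are fine, but the heart of the argument---how PFR yields the specific exponent $r/\log r$, and the iteration that produces it---is absent.
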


Theorem~\ref{thm:BLR_cc} follows as an immediate corollary from the combination of Theorem~\ref{thm:BLR_approxdual} with Theorem~\ref{thm:NW}. We restate it below for the convenience of the reader.

\restate{Theorem~\ref{thm:BLR_cc}}{
Assuming the polynomial Freiman-Ruzsa conjecture over $\F_2^n$, for any boolean function $f$,
$$
\CCdet(f)\le O(\rank(f) / \log \rank(f)).
$$

}

\paragraph{Approximate duality and two-source extractors.}
The original application of~\cite{zewi2011affine} for the approximate duality conjecture was for the construction of pseudo-random graphs, specifically construction of two-source extractors from certain constructions of two-source dispersers. In the following, we focus for simplicity on the case of dispersers and extractors which output a single bit, and we somewhat abuse the standard notations in this field. Let $G=(U,V,E)$ be a bi-partite graph. The graph $G$ is a $k$-Ramsey graph (also called a disperser), if it contains no bi-partite clique or independent set of size $k \times k$. Equivalently, for any subsets $A \subset U, B \subset V$ of size $|A|=|B|=k$, if we denote by $E(A,B)$ the set of induced edges between $A$ and $B$, then
$$
1 \le |E(A,B)| \le |A||B|-1.
$$
The graph is called a $(k,\eps)$ two-source extractor if in fact the number of edges between $A,B$ is close to what might be expected in a random graph, that is
$$
(1/2-\eps)|A||B| \le |E(A,B)| \le (1/2+\eps)|A||B|.
$$
Ben-Sasson and Ron-Zewi~\cite{zewi2011affine} showed that certain constructions of Ramsey graphs are inherently also two-source extractors for weaker parameters.
Consider the following construction of a bi-partite graph $G=(U,V,E)$: $U,V \subset \F_2^n$, and for $u \in U, v \in V$ we have $(u,v) \in E$ if $\ip{u,v}=1$. Assume that $G$ is not a $(k,\eps)$ two-source extractor. That is, there are subsets $A \subset U, B \subset V$ of size $|A|=|B|=k$ such that (say) $|E(A,B)| \ge (1/2+\eps) |A||B|$. This means that the approximate duality measure between $A,B$ is at least $2\eps$, which by the approximate duality conjecture (Conjecture~\ref{conj:approx_dual}) implies that we can find large subsets $A' \subset A, B' \subset B$ such that (say) $|E(A',B')|=0$. Then, we conclude that the graph $G$ is not a $k'$-Ramsey graph for $k'=\min(|A'|,|B'|)$. Otherwise put, any bi-partite graph, constructed in this way, which is $k'$-Ramsey, must also be a $(k,\eps)$ two-source extractor, where $k$ is somewhat larger than $k'$. For further details we refer the reader to the original paper~\cite{zewi2011affine}.

\section{From randomized to deterministic protocols}
\label{sec:rand_det}

The log-rank conjecture speculates that low rank boolean functions have efficient deterministic protocols. We already saw in Theorem~\ref{thm:NW} that a sufficient condition is that any low rank boolean matrix contains a large monochromatic rectangle. Here, we describe another reduction, due to Gavinsky and the author~\cite{GL13:lowrank_equiv}. We will show that it is also sufficient to construct a randomized protocol computing the function.

A randomized protocol computing a function $f(x,y)$ is a protocol, in which both parties are allowed to use randomized strategies, such that for every input $x,y$, the protocol computes the correct value $f(x,y)$ with probability at least $2/3$. Note that a randomized protocol is a distribution over deterministic protocols. The complexity of a randomized protocol is the maximal number of bits that may be sent by the protocol. We recall Theorem~\ref{thm:GL} for the convenience of the reader.

\restate{Theorem~\ref{thm:GL}}{
If a boolean function has a randomized protocol of complexity $c$, then it also has a deterministic protocol of complexity $O(c \cdot \log^2(\rank(f)))$.
}

\begin{proof}
Let $p(x,y)$ denote the probability that the protocol computes $f$ correctly on inputs $x,y$, where by assumption $p(x,y) \ge 2/3$. We can increase the success probability by repeating the protocol a few times, and computing the majority of the values obtained. Specifically, if we repeat the protocol $O(\log 1/\eps)$ times, we obtain a randomized protocol which uses $c' = O(c \log(1/\eps))$ bits and computes $f(x,y)$ correctly with probability $1-\eps$. A randomized protocol is a distribution over deterministic protocols; hence, if we consider the uniform distribution over inputs, we get by an averaging argument that there exists a deterministic protocol $\pi(x,y)$ of complexity $c'$ such that
$$
\big|\{(x,y) \in X \times Y: \pi(x,y)=f(x,y)\}\big| \ge \left(1-\eps\right)|X \times Y|.
$$
A deterministic protocol of complexity $c'$ corresponds to a partition to $N=2^{c'}$ many rectangles. We next argue that there exists a large rectangle on which $f$ is nearly fixed. Let $R_1,\ldots,R_N$ denote the rectangles corresponding to the protocol $\pi$. Denote by $\mu(R)=|R|/|X \times Y|$ the fractional size of a rectangle, and by $\alpha(R)=|\{(x,y) \in R: \pi(x,y) \ne f(x,y)\}|/|R|$ the fraction of elements in $R$ on which the protocol $\pi$ makes a mistake. By assumption, we have
$$
\sum_{i=1}^N \mu(R_i)=1;\qquad \sum_{i=1}^N \mu(R_i) \alpha(R_i) \le \eps.
$$
One can verify that these imply that there must be a rectangle $R=R_i$ such that
$$
\mu(R) \ge 1/2N;\qquad \alpha(R) \le 2\eps.
$$
As $\pi$ is fixed on $R$, we can assume without loss of generality that
$$
|\{(x,y) \in R: f(x,y) = 1\}| \ge (1-2 \eps)|R|.
$$
Let $r=\rank(f)$. We next show that by setting $\eps = 1/8r$, there exists a large sub-rectangle $R' \subset R$ on which $f$ is monochromatic.

\begin{claim}\label{claim:nearmono_to_mono}
Let $f$ be a boolean function of rank $r$, and assume there exists a rectangle $R$ on which $f(x,y)=1$ for at least $1-1/4r$ of the elements in $R$. Then, there exists a sub-rectangle $R' \subset R$
of size $|R'| \ge |R|/8$ such that $f(x,y)=1$ for all $(x,y) \in R'$.
\end{claim}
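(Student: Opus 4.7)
The plan is to exploit the bound on distinct row patterns afforded by the rank condition. Let $M = M_f|_R$, which has rank at most $r$ and at most $|R|/(4r)$ zero entries. Since $\rank(M) \le r$, there are at most $2^r$ distinct row patterns; partition $A$ into row-types $A_1, \ldots, A_s$ (with $s \le 2^r$) having patterns $v_i \in \{0,1\}^B$, multiplicities $n_i = |A_i|$, and zero-sets $Z_i = \{b \in B : v_i(b) = 0\}$. The hypothesis becomes $\sum_i n_i |Z_i| \le |R|/(4r)$.

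If some type $i^*$ has $n_{i^*} \ge |A|/4$, the per-type bound $n_{i^*} |Z_{i^*}| \le |R|/(4r)$ forces $|Z_{i^*}| \le |B|/r$, so for $r \ge 2$ the sub-rectangle $A_{i^*} \times (B \setminus Z_{i^*})$ is monochromatic of value $1$ and has size at least $(|A|/4)\cdot |B|(1 - 1/r) \ge |R|/8$. The case $r = 1$ is immediate, since every rank-$1$ boolean matrix is an outer product $xy^T$, and the $(1 - 1/4)$-density hypothesis guarantees a monochromatic sub-rectangle of size $\ge 3|R|/4$.

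When no single type has multiplicity $\ge |A|/4$, I would combine types. First restrict to the ``low-zero'' types $I^* = \{i : |Z_i| \le |B|/(2r)\}$, which by a Markov argument on the weighted zero-sum satisfy $\sum_{i \in I^*} n_i \ge |A|/2$. Then greedily build $I \subseteq I^*$, say in decreasing order of the ratio $n_i / |Z_i|$, so that $A' = \bigcup_{i \in I} A_i$ has total weight at least $|A|/4$ while the accumulated union $\bigcup_{i \in I} Z_i$ stays below $|B|/2$ in size; the rectangle $A' \times (B \setminus \bigcup_{i \in I} Z_i)$ is then monochromatic of size $\ge |R|/8$.

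The main obstacle is the diffuse case: the number of low-zero types can reach $2^r$, so the naive union bound $|\bigcup_{i \in I} Z_i| \le \sum_{i \in I} |Z_i|$ may exhaust the budget $|B|/2$ before the row target $|A|/4$ is met. Overcoming this will likely require either a symmetric analysis over column-types (bounded similarly by $2^r$) combined with a K\"onig-type min-cover argument on the bipartite graph of zero entries of $M$, or an LP/fractional refinement of the greedy selection that uses the full weighted constraint $\sum_i n_i |Z_i| \le |R|/(4r)$ rather than its weaker Markov consequence.
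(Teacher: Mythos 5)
There is a genuine gap, and you have correctly located it yourself: the ``diffuse case'' is not handled, and the tools you have set up cannot handle it. The only consequence of $\rank(M)\le r$ that your argument uses is the bound of $2^r$ on the number of distinct row patterns, and that is too weak. Nothing in the combinatorial data $(n_i, Z_i)$ with $\sum_i n_i |Z_i| \le |R|/(4r)$ prevents the zero-sets $Z_i$ of the low-multiplicity types from being pairwise almost disjoint sets of size about $|B|/(2r)$ scattered across $B$; then any collection of types covering $|A|/4$ rows needs on the order of $2^r/4$ types and their union of zero-sets swallows essentially all of $B$. A K\"onig-type or LP refinement still only sees this combinatorial data, so it cannot close the case either --- you must use the linear structure of the row space, not just the count of distinct rows.

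The paper's proof does exactly that, and the missing ingredient is a spanning argument. Keep your Markov step: let $A' \subset A$ be the rows with at most a $1/(2r)$ fraction of non-$1$ entries, so $|A'| \ge |A|/2$. Now pick $x_1,\dots,x_r \in A'$ whose rows span (over $\R$) the row space of $M$ restricted to $A' \times B$, and let $B' = \{y \in B : f(x_1,y)=\cdots=f(x_r,y)=1\}$; a union bound over the $r$ chosen rows gives $|B'| \ge |B|(1 - r \cdot \tfrac{1}{2r}) = |B|/2$. The point is that every row of $A' \times B'$ is a fixed linear combination of the rows $x_1,\dots,x_r$ restricted to $B'$, each of which is constant there; hence every row of $A' \times B'$ is constant, and since $f$ is boolean each such row is all-$1$ or all-$0$. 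Taking the majority half of $A'$ yields a monochromatic $R'$ with $|R'| \ge \tfrac{|A|}{4}\cdot\tfrac{|B|}{2} = |R|/8$. This single linear-algebraic step is what replaces your entire case analysis; your Case 1 and the $r=1$ observation are correct but become unnecessary.
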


\begin{proof}
Let $R=A \times B$.
Let $A' \subset A$ be the set of rows for which at most $1/2r$ fraction of the elements are $-1$,
$$
A'=\big\{x \in A: \left|\{y \in B: f(x,y)=-1\}\right| \le |B|/2r\big\}.
$$
By Markov inequality, $|A'| \ge |A|/2$. Let $x_1,\ldots,x_r \in A'$ be indices so that their rows span $f$ restricted to $A' \times B$. Let
$$
B'=\{y \in B: f(x_1,y)=\ldots=f(x_r,y)=1\}.
$$
Since each of the rows $x_1,\ldots,x_r$ contain at most $1/2r$ fraction of elements which are $-1$ we have $|B'| \ge |B|/2$. Now, this implies that all
rows in $A' \times B'$ are either the all $1$ or all $-1$. Choosing the largest half gives the required rectangle. This gives a monochromatic rectangle $R' \subset R$ of size
$|R'| \ge |R|/8$.
\end{proof}

To conclude, we would like to apply Theorem~\ref{thm:NW} in order to show the existence of a deterministic protocol. The reader can verify, that although the conditions of Theorem~\ref{thm:NW} require one to show that any low rank function has a large monochromatic rectangle, in fact for the proof to go through, it suffices to assume that this holds only for functions which are restrictions of $f$ to rectangles.
The same argument as above shows that for any rectangle $R \subset X \times Y$, there exists a sub-rectangle
$R' \subset R$ of size $|R'| \ge 2^{-O(c \log(r))} |R|$ on which $f$ is monochromatic. Note that, as the bound $c$ does not improve as the rank decreases, we incur an additional multiplicative factor of $\log(r)$ in the communication complexity. We deduce that there exists a deterministic protocol computing $f$ of complexity $O(c \log^2(r))$, as claimed.
\end{proof}

\section{Discrepancy of matrices and the log-rank conjecture}
\label{sec:disc}

Let $f:X \times Y \to \{-1,1\}$ be a boolean function. For a distribution $\mu$ on $X \times Y$, the \emph{discrepancy} of $f$ with respect to $\mu$ is the maximal correlation that $f$ has with rectangles,
$$
\disc(f;\mu) = \max_R \left| \sum_{(x,y) \in R} f(x,y) \mu(x,y) \right|
$$
where $R$ ranges over all rectangles. The discrepancy of $f$ is its discrepancy for the worse case distribution,
$$
\disc(f) = \min_{\mu} \disc(f;\mu).
$$
Discrepancy is a well-studied property in the context of communication complexity lower bounds, see e.g. the survey~\cite{Lokam09:book} for details. On the other hand, it is known that low-rank boolean matrices have noticeable discrepancy~\cite{LMSS07_Com,LS09_Lea}: if $f$ has rank $r$ then
\begin{equation}\label{eq:disc_lowrank}
\disc(f) \ge \frac{1}{8\sqrt{r}}.
\end{equation}
A result of the author~\cite{lovett2013communication} shows that discrepancy can be used to prove upper bounds as well. We restate Theorem~\ref{thm:lovett} for the convenience of the reader.

\restate{Theorem~\ref{thm:lovett}}{
For any boolean function $f$,
$$
\CCdet(f)\le O\left(\sqrt{\rank(f)} \cdot \log \rank(f)\right).
$$
}

The following lemma is the main technical tool. It shows that a function with high discrepancy contains a large rectangle which is almost monochromatic. In fact, this is true with respect
to any distribution over the inputs. We make the following definitions: given a distribution $\mu$ over $X \times Y$, let $\mu(R) = \sum_{(x,y) \in R} \mu(x,y)$ denote the probability of an input landing in $R$, and
$\E_{\mu}[f] = \sum_{(x,y) \in X \times Y} \mu(x,y) f(x,y)$ the average of $f$ with respect to $\mu$. For a rectangle $R$ such that $\mu(R)>0$, let $\mu|R$ the distribution $\mu$ conditioned on being in $R$, that is,
$(\mu|R)(x,y)=1_{(x,y) \in R} \cdot \mu(x,y) / \mu(R)$.

\begin{lemma}\label{lemma:disc_nearmono}
Let $f:X \times Y \to \{-1,1\}$ be a function with $\disc(f)=\delta$. Then for any $\eps>0$ and any distribution $\mu$ over $X \times Y$, there exists a rectangle $R$ with
$$
\mu(R) \ge 2^{-O(\delta^{-1} \cdot \log(1/\eps))}
$$
such that $\big| \E_{\mu|R}[f] \big| \ge 1-\eps$.
\end{lemma}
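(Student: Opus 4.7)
The approach is iterative: I would construct a nested sequence $R_0 = X \times Y \supseteq R_1 \supseteq \cdots \supseteq R_T$ and maintain the invariant that $\rho_t := \Pr_{\mu|R_t}[f = -1]$ decreases geometrically. By replacing $f$ with $-f$ if necessary we may assume $\rho_0 \le 1/2$; the target is $\rho_T \le \eps/2$, which yields $\E_{\mu|R_T}[f] \ge 1 - \eps$ and hence the claimed bound on $|\E_{\mu|R_T}[f]|$.

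At step $t$, rather than invoking discrepancy against $\mu|R_t$ directly, I would apply it to a reweighted distribution $\nu_t(x,y) \propto \mu|R_t(x,y) \cdot w_t(f(x,y))$ with $w_t(-1) = (1-\rho_t)/\rho_t$ and $w_t(1) = 1$, so that $\nu_t$ assigns equal mass to $\{f=1\}$ and $\{f=-1\}$. The discrepancy assumption, applied to $\nu_t$, yields a rectangle $R' \subseteq R_t$ (intersecting with $R_t$ if necessary) such that the imbalance between the $\mu$-weights of $R'\cap\{f=1\}$ and $R'\cap\{f=-1\}$ (measured against the reweighting) is at least $2\delta(1-\rho_t)\mu(R_t)$ in absolute value. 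If the sign of the imbalance favors $+1$, I set $R_{t+1}=R'$; otherwise, I use $R_t\setminus R'$, which is a disjoint union of at most two rectangles, and pick one of them by a pigeonhole argument. A direct computation with $P := \mu(R'\cap\{f=1\})$ and $N := \mu(R'\cap\{f=-1\})$, using $P+N \le \mu(R_t)$ and the invariant $\rho_t \le 1/2$, yields the multiplicative improvement $\rho_{t+1} \le \rho_t(1 - c\delta)$ for an absolute constant $c > 0$, while $\mu(R_{t+1})$ decreases by at most a controlled factor.

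Iterating for $T = O(\delta^{-1}\log(1/\eps))$ steps gives $\rho_T \le \rho_0 (1 - c\delta)^T \le \eps/2$, and summing the per-step logarithmic contractions in mass gives the desired $\mu(R_T) \ge 2^{-O(\delta^{-1}\log(1/\eps))}$.

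The main technical obstacle is the case where we must recurse on $R_t \setminus R'$: since this set is itself not a rectangle but a disjoint union of two rectangles, we need to choose one of them while controlling both its $\rho$-value and its $\mu$-mass simultaneously, and naive pigeonhole delivers one or the other but not both. Resolving this cleanly requires either a more careful selection of the sub-rectangle or a slightly weakened per-step decrease $\rho_{t+1}\le\rho_t(1-c\delta)$ with a smaller constant $c$. A related subtle point is the balanced choice of reweighting $w_t$: it is precisely this balancing that converts the raw discrepancy bound into a \emph{multiplicative} improvement of $\rho_t$, which is essential to achieve $T = O(\delta^{-1}\log(1/\eps))$ iterations rather than the weaker $O(1/\delta)$ one would get from merely additive progress.
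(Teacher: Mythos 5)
Your approach is genuinely different from the paper's: you run an iterative boosting argument (repeatedly apply the discrepancy bound to a reweighted conditional distribution and shrink the rectangle), which is in the spirit of the original proof in~\cite{lovett2013communication}, whereas the paper's proof (due to Vadhan) first uses von Neumann's minimax theorem to extract a single distribution $\rho$ over rectangles separating $f^{-1}(1)$ from $f^{-1}(-1)$ (every $1$-point is covered with probability $p$, every $(-1)$-point with probability $q \le p - \frac{2}{3}\delta$), and then takes the intersection of $t$ independent samples from $\rho$, analyzing one random variable $\mu(R^*) - (1/\eps)\mu(R^* \cap f^{-1}(-1))$ in expectation. The minimax route avoids both of the difficulties you run into, since there is no recursion and no complement of a rectangle to deal with.

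There are two gaps. The first is the one you flag yourself: when the biased rectangle favors $-1$ you must recurse on $R_t \setminus R'$, a union of two rectangles, and you need one piece that simultaneously has non-negligible mass and an improved $-1$-fraction. This is in fact fixable: if every piece with mass at least $\frac{1}{2}\mu(R_t\setminus R')$ had $-1$-fraction above $\rho_t(1-\delta/2)$, and every piece with $-1$-fraction at most $\rho_t(1-\delta/2)$ had mass below $\frac{\delta}{2}\mu(R_t\setminus R')$, averaging would contradict the bound $\rho_t(1-\delta)$ you prove for all of $R_t\setminus R'$; so a weakened constant does the job, as you guessed. The second gap is more serious and you do not address it: your per-step guarantees are that $\rho_{t+1} \le \rho_t(1-c\delta)$ while the mass drops by a factor that can be as small as $\Theta(\delta)$ (the discrepancy bound may hand you a rectangle of $\nu_t$-mass only $O(\delta)$). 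Running $T = \Theta(\delta^{-1}\log(1/\eps))$ steps with worst-case mass loss $\delta$ per step gives only $\mu(R_T) \ge 2^{-O(\delta^{-1}\log(1/\eps)\log(1/\delta))}$, which is weaker than the claimed $2^{-O(\delta^{-1}\log(1/\eps))}$ by a $\log(1/\delta)$ factor (and would degrade Theorem~\ref{thm:lovett} to $O(\sqrt{r}\log^2 r)$). To recover the stated bound you must couple the two quantities: writing $\beta_t$ for the per-step mass ratio, the correct per-step bound is $\rho_{t+1} \le \rho_t(1 - \Omega(\delta)/\beta_t)$, so a step that loses a lot of mass also makes correspondingly more progress on $\rho$; stopping as soon as $\rho_t \le \eps/2$ and using $\log(1/\beta) = O(1/\beta)$ then gives $\sum_t \log(1/\beta_t) = O(\delta^{-1}\log(1/\eps))$. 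This is precisely the role played by $p\log(1/p) = O(1)$ in the paper's bound $p^t = 2^{-O(\delta^{-1}\log(1/\eps))}$, and without some version of it your final mass bound does not follow from the steps as written. Note also that this coupling interacts with the complement-case pigeonhole (which can lose another $\delta$ factor of mass decoupled from the $\rho$ improvement), so the bookkeeping there needs care as well.
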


\begin{proof}[Proof of Theorem~\ref{thm:lovett}, assuming Lemma~\ref{lemma:disc_nearmono}]
Let $f$ be any boolean function of rank $r$. Apply Lemma~\ref{lemma:disc_nearmono} with $\mu$ the uniform distribution over $X \times Y$, $\delta \ge 1/8\sqrt{r}$ and $\eps=1/4r$, to deduce the existence of a rectangle $R \subset X \times Y$ of size $|R| \ge 2^{-O(\sqrt{r} \log(r))} |X \times Y|$ such that $f(x,y)=v$ for $1-1/4r$ fraction of elements in $R$. Apply Claim~\ref{claim:nearmono_to_mono} to deduce that there exists a sub-rectangle $R' \subset R$ of size $|R'| \ge |R|/8$ on which $f$ is monochromatic. By Theorem~\ref{thm:NW}, this implies that any function of rank $r$ has a deterministic protocol of complexity $O(\sqrt{r} \log(r))$.
\end{proof}

We now turn to prove Lemma~\ref{lemma:disc_nearmono}. The proof of Lemma~\ref{lemma:disc_nearmono} which we give below is a simplification of the original proof of~\cite{lovett2013communication}, which was presented to us by Salil Vadhan~\cite{Vadhan:personal}.

\begin{proof}[Proof of Lemma~\ref{lemma:disc_nearmono}]

Let us assume without loss of generality that $\E_{\mu}[f] \ge 0$, otherwise apply the lemma to $-f$.
Let $\sigma$ be any distribution over $X \times Y$ such that $\E_{\sigma}[f]=0$. By assumption, there exists a rectangle $R_1$ such that
$$
\left|\sum_{(x,y) \in R_1} \sigma(x,y) f(x,y)\right| \ge \delta.
$$
Let $R_1 = A \times B$ and define $A' = X \setminus A, B' = Y \setminus B$. Consider the four rectangles
$$
R_1=A \times B, R_2 = A' \times B, R_3 = A \times B', R_4 = A' \times B'.
$$
As $\sum_{(x,y) \in X \times Y} \sigma(x,y) f(x,y) = \E_{\sigma}[f]=0$, there must exist a rectangle $R \in \{R_1,R_2,R_3,R_4\}$ such that
$$
\sum_{(x,y) \in R} \sigma(x,y) f(x,y) \ge \delta/3.
$$
This holds for any distribution $\sigma$ for which $\E_{\sigma}[f]=0$. Hence, we can apply von Neumann's Minimax Theorem~\cite{neumann1928theorie} and deduce that there exists a distribution $\rho$ over rectangles, such that for any distribution $\sigma$ for which $\E_{\sigma}[f]=0$, we have
$$
\E_{R \sim \rho}\left[ \sum_{(x,y) \in R} \sigma(x,y) f(x,y) \right] \ge \delta/3.
$$
Equivalently,
$$
\sum_{(x,y) \in X \times Y} \Pr_{R \sim \rho}[(x,y) \in R] \cdot \sigma(x,y) f(x,y) \ge \delta/3.
$$

Fix $(x_1,y_1) \in f^{-1}(1)$ and $(x_2,y_2) \in f^{-1}(-1)$. Let $\sigma$ be the distribution given by $\sigma(x_1,y_1)=\sigma(x_2,y_2)=1/2$. As $\E_{\sigma}[f]=0$ we have
$$
\Pr_{R \sim \rho}[(x_1,y_1) \in R] - \Pr_{R \sim \rho}[(x_2,y_2) \in R] \ge (2/3) \delta.
$$
Let $p$ be the \emph{minimal} probability that $(x_1,y_1) \in R$ over all $(x_1,y_1) \in f^{-1}(1)$, where $R$ is sampled according to $\rho$; and let $q$ be the \emph{maximal} probability that $(x_2,y_2) \in R$ over all $(x_2,y_2) \in f^{-1}(-1)$. We established that
$$
p - q \ge (2/3) \delta.
$$

Fix $t \ge 1$ and let $R_1,\ldots,R_t \sim \rho$ be chosen independently, and let $R^* = R_1 \cap \ldots \cap R_t$ be their intersection. We will show that for an appropriate choice of $t$, the rectangle $R^*$ satisfies the requirements of the lemma with positive probability (and hence such a rectangle exists). We will use the fact that for any $x \in X, y \in Y$,
$$
\Pr[(x,y) \in R^*] = \Pr_{R \sim \rho}[(x,y) \in R]^t.
$$
Consider the random variable
$$
T = \mu(R^*) - (1/\eps) \cdot \mu(R^* \cap f^{-1}(-1)).
$$
By linearity of expectation, we have
\begin{align*}
\E[T] &= \sum_{(x,y) \in f^{-1}(1)} \mu(x,y) \Pr[(x,y) \in R^*] - \sum_{(x,y) \in f^{-1}(-1)} \mu(x,y) ((1/\eps)-1) \Pr[(x,y) \in R^*] \\
& \ge \mu(f^{-1}(1)) \cdot p^t - \mu(f^{-1}(-1)) \cdot q^t / \eps \\
& \ge 1/2 \cdot (p^t - q^t / \eps),
\end{align*}
where we used our initial assumption that $\E_{\mu}[f]=\mu(f^{-1}(1))-\mu(f^{-1}(-1)) \ge 0$. Setting $t=O(p/\delta \cdot \log(1/\eps))$ gives
$$
q^t / p^t \le (1 - (2/3) \delta/p)^t \le \eps/2.
$$
For this choice of $t$, we have
$$
\E[T] \ge p^t / 4 = 2^{-O(\delta^{-1} \cdot \log(1/\eps))}.
$$
Let $R^*$ be a rectangle which achieves this average, that is
$$
\mu(R^*) - (1/\eps) \cdot \mu (R^* \cap f^{-1}(-1)) \ge 2^{-O(\delta^{-1} \cdot \log(1/\eps))}.
$$
In particular, we learn that both $\mu(R^*) \ge 2^{-O(\delta^{-1} \cdot \log(1/\eps))}$ (which satisfies the first requirement) and furthermore that $\mu(R^* \cap f^{-1}(-1)) \le \eps \cdot \mu(R^*)$, which implies that $\E_{\mu|R^*}[f] \ge 1-\eps$ (which satisfies the second requirement).
\end{proof}

\section{Further research}
\label{sec:further}

There are several directions for further research. We describe a few concrete ones below.

\subsection{Randomized protocols vs approximate rank}
The \emph{approximate rank} of a boolean function $f(x,y)$ is the minimal rank of an $X \times Y$ real matrix $M$ such that
$$
2/3 \le M(x,y) f(x,y) \le 1.
$$
Similar to the log rank lower bound for the deterministic communication complexity, the log of the approximate rank is a lower bound on the randomized communication complexity of a function. The log-rank conjecture for randomized protocols speculates that it is also an upper bound, up to polynomial factors. As a first step, one can attempt to generalize Theorem~\ref{thm:lovett} to approximate rank and randomized protocols.

\begin{prob}
Let $f$ be a boolean function with approximate rank $r$. Show that $f$ has a randomized protocol of complexity $\sqrt{r} \cdot \poly \log(r)$.
\end{prob}

\subsection{Quantum protocols for low-rank matrices}
The work of~\cite{GL13:lowrank_equiv} shows that if low-rank functions have certain types of efficient protocols (randomized protocols, low information cost protocols, or zero-communication protocols), then up to a poly-logarithmic factor in the rank, they also have efficient deterministic protocols. One type of protocol which they were not able to analyze is quantum protocols. This is interesting on its own right, but also because to the best of our current knowledge, it may be that quantum protocols are only polynomially better than randomized protocols, for any complete boolean function (exponential separations are known for partial functions, see e.g.~\cite{raz1999exponential,regev2011quantum}). Thus, understanding quantum protocols, even just for low-rank functions, seems like an important step towards a better understanding of quantum protocols in general.

\begin{prob}
Let $f$ be a boolean function which can be computed by a quantum protocol of complexity $c$. Show that $f$ can also be computed by a deterministic protocol of complexity $c \cdot \poly \log (\rank(f))$.
\end{prob}

\subsection{The structure of low-rank sparse matrices, and matrix rigidity}

The proof of Theorem~\ref{thm:lovett} applies to boolean matrices. We conjecture in~\cite{lovett2013communication} that it can be generalized to show that any low rank sparse matrix contains a large zero rectangle.

\begin{conj}\label{conj:sparse}
Let $M$ be an $n \times n$ real matrix with $\rank(M)=r$ and such that $M_{i,j} \ne 0$ for at most $\eps n^2$ entries. Then, there exist $A,B \subset [n]$ such that
$$
M_{a,b}=0 \qquad \forall a \in A, b \in B
$$
such that $|A|,|B| \ge n \cdot \exp(-O(\sqrt{\eps r}))$.
\end{conj}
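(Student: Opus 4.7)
The plan is to adapt the discrepancy-based proof of Theorem~\ref{thm:lovett} to the sparse real setting. Writing $M=UV^\top$ with $U,V\in\R^{n\times r}$ and viewing $M_{i,j}=\ip{u_i,v_j}$ for $u_i,v_j\in\R^r$, the conjecture is equivalent to finding large $A,B\subseteq[n]$ with $\ip{u_a,v_b}=0$ for all $a\in A,b\in B$, knowing that this orthogonality already holds for $(1-\eps)n^2$ of the pairs. The exponent $\sqrt{\eps r}$ is essentially what Lemma~\ref{lemma:disc_nearmono} produces from a discrepancy bound of order $\sqrt{\eps/r}$ combined with an approximation slack $\eta\approx 1/r$, which strongly suggests the existing machinery suffices once the right sparse analog of~(\ref{eq:disc_lowrank}) is in hand.

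The first step would be to prove a sparse discrepancy inequality: if $M$ is a real $n\times n$ matrix with $\rank(M)=r$, entries of magnitude at most $1$, and at most $\eps n^2$ nonzero entries, then (for a suitable measure $\mu$ on $[n]\times[n]$) the sign pattern $\mathrm{sgn}(M)$ has discrepancy at least $\Omega(\sqrt{\eps/r})$. The natural route is via the spectral chain $\|M\|_F^2\le\eps n^2$ and $\|M\|_{op}\ge\|M\|_F/\sqrt{r}\ge n\sqrt{\eps/r}$, coupled with a cut-norm vs.\ operator-norm comparison that converts spectral mass of $M$ itself into a rectangle of large signed mass in the sign pattern.

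Given that bound, I would apply Lemma~\ref{lemma:disc_nearmono} to the $\pm 1$-valued function $f(i,j)=1-2\cdot\mathbf{1}[M_{i,j}\ne 0]$ with $\delta=\Omega(\sqrt{\eps/r})$ and $\eta=1/(4r)$, obtaining a rectangle $R$ of measure at least $2^{-O(\sqrt{\eps r}\cdot\log r)}$ on which at most a $1/(4r)$ fraction of entries of $M$ are nonzero. I would then upgrade this near-zero rectangle to a fully zero sub-rectangle via a verbatim analog of Claim~\ref{claim:nearmono_to_mono}: by Markov at least half the rows $A'\subseteq A$ of $R=A\times B$ have nonzero density at most $1/(2r)$; pick $r$ such rows spanning the row space of $M|_{A'\times B}$, and let $B'\subseteq B$ be the common zero columns of these $r$ rows, so that $|B'|\ge|B|/2$ and every row of $A'\times B'$ is a linear combination of zero rows, hence identically zero. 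Up to $\log r$ factors in the exponent (matching the analogous looseness in Theorem~\ref{thm:lovett}), this yields the desired zero rectangle of side length $n\cdot\exp(-O(\sqrt{\eps r}))$.

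The hard part will be the sparse discrepancy inequality itself. The boolean argument of~\cite{LMSS07_Com,LS09_Lea} exploits that a $\pm 1$ matrix has $\|M\|_F=n$, so rank $r$ directly forces $\|M\|_{op}\ge n/\sqrt{r}$ and then nontrivial discrepancy of $M$; in the sparse real setting the spectral content of $M$ is genuinely smaller, and, critically, what we need is the discrepancy of the \emph{sign pattern} of $M$ rather than of $M$ itself. When the nonzero entries of $M$ vary wildly in magnitude, the spectral norm of $M$ can be dominated by a few large entries, which need not translate into structural information about the support of $M$. Overcoming this probably requires a $\gamma_2$-norm argument (writing $M$ in a bounded factorization so that all nonzero entries contribute comparably to the spectral picture) or a direct geometric analysis of the vectors $u_i,v_j\in\R^r$ under the assumption that most pairs are orthogonal. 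This obstacle is closely tied to the matrix rigidity questions raised in Section~\ref{sec:further}: a resolution would say that a matrix simultaneously low-rank and sparse must carry a large combinatorial structure, namely a zero rectangle.
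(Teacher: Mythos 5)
This statement is Conjecture~\ref{conj:sparse}: the paper does not prove it, explicitly presents it as open, and only derives consequences from it (the rigidity corollary). So there is no proof in the paper to compare against, and the only question is whether your proposal actually settles the conjecture. It does not: the step you yourself flag as ``the hard part'' --- a discrepancy lower bound of order $\sqrt{\eps/r}$ for the \emph{sign pattern} of a sparse rank-$r$ real matrix --- is exactly where the conjecture lives, and your sketch of it does not go through. The rank hypothesis is on $M$, not on the $0/1$ indicator of its support, and the bound~(\ref{eq:disc_lowrank}) does not transfer: a rank-$r$ real matrix can have a support pattern of essentially arbitrary rank, so nothing like $\disc(\mathrm{sgn}(M))\ge\Omega(1/\sqrt{\rank})$ is available. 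The spectral chain you propose also needs a \emph{lower} bound $\|M\|_F\ge n\sqrt{\eps}$, which requires the nonzero entries to be bounded below in magnitude --- not assumed in the conjecture --- and even then it lower-bounds the operator norm of $M$, not the cut norm of its support pattern. The remaining pieces of your plan (the real-matrix analog of Claim~\ref{claim:nearmono_to_mono}, where spanning rows that vanish on $B'$ force all rows of $A'\times B'$ to vanish) are fine and indeed cleaner over $\R$ than over $\{-1,1\}$, but they are not where the difficulty is.

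There is also a quantitative obstruction showing that this route, even if the missing inequality held as stated, could not deliver the conjectured bound. With $\delta=\Theta\left(\sqrt{\eps/r}\right)$ and slack $1/(4r)$, Lemma~\ref{lemma:disc_nearmono} produces a rectangle of measure $2^{-O(\delta^{-1}\log r)}=2^{-O(\sqrt{r/\eps}\,\log r)}$, not $\exp(-O(\sqrt{\eps r}))$; note that $\sqrt{r/\eps}\ge\sqrt{\eps r}$ always, and the gap is a factor of $1/\eps$ inside the square root, so the dependence on $\eps$ goes in the wrong direction (your rectangle shrinks as the matrix gets sparser, whereas the conjecture demands it grow). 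To extract $\exp(-O(\sqrt{\eps r}))$ from Lemma~\ref{lemma:disc_nearmono} you would need discrepancy roughly $\log(r)/\sqrt{\eps r}$, which exceeds $1$ once $\eps\lesssim\log^2(r)/r$ and is therefore impossible. Matching the conjectured bound --- which the paper notes is tight via the small-Hamming-weight construction, the same one that calibrates Conjecture~\ref{conj:approx_dual} --- seems to require a genuinely different mechanism, closer to the additive-combinatorial arguments of Section~\ref{sec:approx_dual} than to the discrepancy argument of Section~\ref{sec:disc}.
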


The reader can observe the similarities of Conjecture~\ref{conj:sparse} to the approximate duality conjecture (Conjecture~\ref{conj:approx_dual}) which we discussed. Note that here we
consider the case where nearly all the elements are zero, while in the approximate duality conjecture we only assumed a small bias. Nevertheless, the same construction shows that the bounds in Conjecture~\ref{conj:sparse}, if true, are the best possible.

A matrix $M$ is called $(r,s)$-rigid, if its rank cannot be made smaller than $r$ by changing at most $s$ entries in $M$. The problem of explicitly constructing rigid matrices was
introduced by Valiant~\cite{Valiant:rigidity} in the context of arithmetic circuits lower bounds, and was also studied by Razborov~\cite{Razborov89:rigid_cc} in
the context of separation of the analogs of PH and PSPACE in communication complexity. Despite much research, the best results to date are achieved by the so-called
"untouched minor" argument, which gives explicit matrices which are $(r,s)$-rigid with $s = \Omega\left(\frac{n^2}{r} \log\left(\frac{n}{r}\right)\right)$. See e.g. the
survey of Lokam~\cite{Lokam09:book} for details. We will prove the following corollary of Conjecture~\ref{conj:sparse}, which improves previous bounds by a logarithmic factor.

\begin{cor}
Assuming Conjecture~\ref{conj:sparse}, there exists an explicit $n \times n$ real matrix which is $(r,s)$-rigid for $s=\Omega\left(\frac{n^2}{r} \log^2\left(\frac{n}{r}\right)\right)$.
\end{cor}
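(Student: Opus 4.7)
The plan is to push the classical untouched-minor argument through Conjecture~\ref{conj:sparse} to gain an extra logarithmic factor. Let $M$ be an explicit $n \times n$ real matrix with the property that every square submatrix is non-singular, for instance a Cauchy matrix $M_{i,j} = 1/(x_i - y_j)$ for any $2n$ distinct rationals. Suppose towards contradiction that $M = M' + E$ where $\rank(M') \le r$ and $E$ has at most $s$ nonzero entries. Exhibiting subsets $A, B \subset [n]$ with $|A|, |B| > r$ and $E|_{A \times B} \equiv 0$ would already give a contradiction: $M[A,B] = M'[A,B]$ would have rank at most $r$, whereas the non-singular-minor property of $M$ forces $\rank(M[A,B]) = \min(|A|,|B|) > r$.

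To find such a zero rectangle, I would apply Conjecture~\ref{conj:sparse} to a carefully chosen submatrix of $E$. A direct application to the full $n\times n$ error matrix uses $\eps = s/n^2$ and $\rank(E) \le n$, producing a zero rectangle of size only $n \exp(-O(\sqrt{s/n}))$ and recovering rigidity only for $s = O(n \log^2(n/r))$ --- too weak by a factor of $n/r$. The improvement comes from passing to a submatrix of $E$ whose rank is tied to $r$ rather than to $n$. Concretely, select $k = \Theta(r)$ rows with the fewest nonzero entries; by averaging, they carry at most $O(sk/n)$ nonzeros in total. Applying a rectangular form of Conjecture~\ref{conj:sparse} to this $k \times n$ slice, with sparsity $\eps' = O(s/n^2)$ and rank at most $k$, yields a zero sub-rectangle of dimensions roughly $k \exp(-O(\sqrt{sr}/n)) \times n \exp(-O(\sqrt{sr}/n))$.

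Demanding that the column dimension of this zero rectangle exceed $r$ gives $\sqrt{sr}/n \lesssim \log(n/r)$, which rearranges to $s \lesssim (n^2/r) \log^2(n/r)$ --- exactly the claimed rigidity threshold. Together with the non-singular-minor property of $M$, this would complete the contradiction and therefore the rigidity bound.

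The main obstacle I foresee is simultaneously ensuring that the row dimension of the zero rectangle also exceeds $r$. With $k = \Theta(r)$, the row side of the conjecture's bound has only a constant-factor slack over $r$, which forces $\sqrt{sr}/n$ to be $O(1)$ and collapses the estimate to the classical $s = O(n^2/r)$ with no logarithmic gain. Overcoming this will require either a rectangular strengthening of Conjecture~\ref{conj:sparse} that decouples the two dimensions of the guaranteed zero rectangle, or an iterative amplification that boosts a thin zero rectangle into a square one, in the spirit of the Nisan--Wigderson bootstrap in Theorem~\ref{thm:NW}. Preserving the full $\log^2(n/r)$ factor through this amplification is the step I expect to require the most care.
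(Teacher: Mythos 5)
There is a genuine gap, and you already put your finger on it in your last paragraph. The root cause is the choice of the explicit matrix $M$. By taking $M$ to be a fully non-singular matrix (Cauchy), the error term $E = M - M'$ necessarily has rank at least $n - r$, which is essentially $n$. Conjecture~\ref{conj:sparse} gives a zero rectangle whose size depends on $\exp(-\sqrt{\eps \cdot \rank(E)})$, so a high-rank $E$ ruins the bound, exactly as you computed. Your proposed patches --- restricting to $\Theta(r)$ sparse rows and hoping for a rectangular variant of the conjecture, or an amplification step --- address the symptom rather than the cause, and as you observe they do not deliver both dimensions of the zero rectangle above $r$ without losing the logarithmic gain.

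The missing idea is simply to choose a different $M$. The paper takes $M$ to have rank \emph{exactly} $r$, with the property that every $r \times r$ minor of $M$ is non-singular (for instance $M = NN^{t}$ where $N$ is an explicit $n \times r$ matrix any $r$ of whose rows are independent). If $M = L + S$ with $\rank(L) < r$ and $S$ $s$-sparse, then $\rank(S) \le \rank(M) + \rank(L) < 2r$ automatically, so Conjecture~\ref{conj:sparse} applies to $S$ with $\eps = s/n^2$ and rank parameter $O(r)$, producing a zero rectangle $A \times B$ of side length $n \cdot \exp(-O(\sqrt{\eps r}))$. If $|A|,|B| \ge r$ then $M[A,B] = L[A,B]$ must have rank $r$ (every $r \times r$ minor of $M$ is non-singular), contradicting $\rank(L) < r$; so $n \cdot \exp(-O(\sqrt{\eps r})) < r$, which rearranges to $s = \eps n^2 = \Omega\bigl(\frac{n^2}{r}\log^2(\frac{n}{r})\bigr)$. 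This makes the whole argument go through in one application of the conjecture, with no need for submatrix selection or a rectangular variant. Your contradiction mechanism and the final rearrangement are exactly right; only the starting matrix needs to change.
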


\begin{proof}
Let $M$ be an $n \times n$ matrix of rank $r$, such that all $r \times r$ minors of $M$ have full rank. For example, such a matrix may be constructed as $M= N N^t$ where $N$ is an $n \times r$
matrix such that any $r$ rows of $N$ are linearly independent. Assume that $M$ is not $(r,s)$-rigid. Then, we can decompose
$$
M=L+S,\quad \rank(L)<r, \quad S \textrm{ is s-sparse}.
$$
Let $s=\eps n^2$. The matrix $S$ is both $s$-sparse and low rank, as $\rank(S) \le \rank(M)+\rank(L) < 2r$. Hence, by Conjecture~\ref{conj:sparse}, there exist $A,B \subset [n]$ of size
$|A|,|B| \ge n \cdot \exp(-O(\sqrt{\eps r}))$ such that $S_{a,b}=0$ for all $a \in A, b \in B$. Hence, $M_{a,b}=L_{a,b}$. If $|A|,|B| \ge r$, we must have that $\rank(L) \ge \rank(M)=r$. So,
$n \cdot \exp(-O(\sqrt{\eps r})) < r$ and the corollary follows by rearranging the terms.
\end{proof}

\paragraph{Acknowledgements}
I thank V. Arvind for inviting me to write this survey. I thank Dmitry Gavinsky, Noga Ron-Zewi and Adi Shraibman for helpful comments on earlier versions of this manuscript.

\bibliographystyle{alpha}
\bibliography{log-rank}

\newcommand{\etalchar}[1]{$^{#1}$}
\begin{thebibliography}{LMSS07}

\bibitem[AS89]{alon1989counterexample}
Noga Alon and Paul~D Seymour.
\newblock A counterexample to the rank-coloring conjecture.
\newblock {\em Journal of Graph Theory}, 13(4):523--525, 1989.

\bibitem[BLR12]{BLR12_An_Ad}
Eli {{B}en-Sasson}, Shachar {L}ovett, and Noga {Ron-{Z}ewi}.
\newblock {An Additive Combinatorics Approach Relating Rank to Communication
  Complexity}.
\newblock {\em Proceedings of the 53rd Annual Symposium on Foundations of
  Computer Science}, pages 177--186, 2012.

\bibitem[BZ11]{zewi2011affine}
Eli {Ben-Sasson} and Noga Zewi.
\newblock From affine to two-source extractors via approximate duality.
\newblock In {\em Proceedings of the 43rd annual ACM symposium on Theory of
  computing}, pages 177--186. ACM, 2011.

\bibitem[Faj88]{fajtlowicz1988conjectures}
Siemion Fajtlowicz.
\newblock On conjectures of graffiti.
\newblock {\em Discrete mathematics}, 72(1):113--118, 1988.

\bibitem[GL13]{GL13:lowrank_equiv}
D.~Gavinsky and S.~Lovett.
\newblock {En Route to the log-rank Conjecture: New Reductions and Equivalent
  Formulations}.
\newblock {\em Electronic Colloquium on Computational Complexity (ECCC)},
  20(80), 2013.

\bibitem[Gre04]{green2004finite}
Ben Green.
\newblock Finite field models in additive combinatorics.
\newblock {\em arXiv preprint math/0409420}, 2004.

\bibitem[KL96]{kotlov1996rank}
Andrew Kotlov and L{\'a}szl{\'o} Lov{\'a}sz.
\newblock The rank and size of graphs.
\newblock {\em Journal of Graph Theory}, 23(2):185--189, 1996.

\bibitem[KN97]{KN97_Com}
E.~Kushilevitz and N.~Nisan.
\newblock {Communication Complexity}.
\newblock {\em Cambridge University Press}, 1997.

\bibitem[Kot97]{K97_Rank}
A.~Kotlov.
\newblock {Rank and Chromatic Number of a Graph}.
\newblock {\em Journal of Graph Theory 26(1)}, pages 1--8, 1997.

\bibitem[LLZ11]{leung2011tight}
Ming~Lam Leung, Yang Li, and Shengyu Zhang.
\newblock Tight bounds on communication complexity of symmetric xor functions
  in one-way and smp models.
\newblock In {\em Theory and Applications of Models of Computation}, pages
  403--408. Springer, 2011.

\bibitem[LMSS07]{LMSS07_Com}
N.~Linial, S.~Mendelson, G.~Schechtman, and A.~Shraibman.
\newblock {Complexity Measures of Sign Matrices}.
\newblock {\em Combinatorica 27(4)}, pages 439--463, 2007.

\bibitem[Lok09]{Lokam09:book}
Satyanarayana~V. Lokam.
\newblock Complexity lower bounds using linear algebra.
\newblock {\em Found. Trends Theor. Comput. Sci.}, 4(1\&\#8211;2):1--155,
  January 2009.

\bibitem[Lov12]{lovett2012exposition}
Shachar Lovett.
\newblock An exposition of sanders quasi-polynomial freiman-ruzsa theorem.
\newblock In {\em Electronic Colloquium on Computational Complexity (ECCC)},
  volume~19, page~29, 2012.

\bibitem[Lov13]{lovett2013communication}
Shachar Lovett.
\newblock Communication is bounded by root of rank.
\newblock {\em arXiv preprint arXiv:1306.1877}, 2013.

\bibitem[LS88]{LS88_Lat}
L.~Lov\'{a}sz and M.~Saks.
\newblock {Lattices, M\"obius Functions and Communication Complexity}.
\newblock {\em Annual Symposium on Foundations of Computer Science}, pages
  81--90, 1988.

\bibitem[LS09a]{lee2009lower}
Troy Lee and Adi Shraibman.
\newblock {\em Lower bounds in communication complexity}.
\newblock Now Publishers Inc, 2009.

\bibitem[LS09b]{LS09_Lea}
N.~Linial and A.~Shraibman.
\newblock {Learning Complexity vs.\ Communication Complexity}.
\newblock {\em Combinatorics, Probability \& Computing 18(1-2)}, pages
  227--245, 2009.

\bibitem[LZ10]{lee2010composition}
Troy Lee and Shengyu Zhang.
\newblock Composition theorems in communication complexity.
\newblock In {\em Automata, Languages and Programming}, pages 475--489.
  Springer, 2010.

\bibitem[LZ13]{liu2013quantum}
Yang Liu and Shengyu Zhang.
\newblock Quantum and randomized communication complexity of xor functions in
  the smp model.
\newblock In {\em Electronic Colloquium on Computational Complexity (ECCC)},
  volume~20, page~10, 2013.

\bibitem[MO09]{montanaro2009communication}
Ashley Montanaro and Tobias Osborne.
\newblock On the communication complexity of xor functions.
\newblock {\em arXiv preprint arXiv:0909.3392}, 2009.

\bibitem[MS82]{mehlhorn1982vegas}
Kurt Mehlhorn and Erik~M Schmidt.
\newblock Las vegas is better than determinism in vlsi and distributed
  computing.
\newblock In {\em Proceedings of the fourteenth annual ACM symposium on Theory
  of computing}, pages 330--337. ACM, 1982.

\bibitem[Neu28]{neumann1928theorie}
J~v Neumann.
\newblock Zur theorie der gesellschaftsspiele.
\newblock {\em Mathematische Annalen}, 100(1):295--320, 1928.

\bibitem[NW94]{NW94_On_Ra}
N.~Nisan and A.~Wigderson.
\newblock {On Rank vs.\ Communication Complexity}.
\newblock {\em Proceedings of the 35rd Annual Symposium on Foundations of
  Computer Science}, pages 831--836, 1994.

\bibitem[Raz89]{Razborov89:rigid_cc}
Alexander Razborov.
\newblock On rigid matrices (in russian).
\newblock {\em Technical report, Steklov Mathematical Institute}, 1989.

\bibitem[Raz92]{razborov1992gap}
Alexander~A Razborov.
\newblock The gap between the chromatic number of a graph and the rank of its
  adjacency matrix is superlinear.
\newblock {\em Discrete mathematics}, 108(1):393--396, 1992.

\bibitem[Raz99]{raz1999exponential}
Ran Raz.
\newblock Exponential separation of quantum and classical communication
  complexity.
\newblock In {\em Proceedings of the thirty-first annual ACM symposium on
  Theory of computing}, pages 358--367. ACM, 1999.

\bibitem[RK11]{regev2011quantum}
Oded Regev and Bo'az Klartag.
\newblock Quantum one-way communication can be exponentially stronger than
  classical communication.
\newblock In {\em Proceedings of the 43rd annual ACM symposium on Theory of
  computing}, pages 31--40. ACM, 2011.

\bibitem[RS95]{raz1995log}
Ran Raz and Boris Spieker.
\newblock On the “log rank”-conjecture in communication complexity.
\newblock {\em Combinatorica}, 15(4):567--588, 1995.

\bibitem[San10]{sanders2010bogolyubov}
Tom Sanders.
\newblock On the bogolyubov-ruzsa lemma.
\newblock {\em arXiv preprint arXiv:1011.0107}, 2010.

\bibitem[STV13]{shpilka2013structure}
Amir Shpilka, Avishay Tal, and {Ben lee} Volk.
\newblock On the structure of boolean functions with small spectral norm.
\newblock {\em CoRR}, abs/1304.0371, 2013.

\bibitem[SW{\etalchar{+}}12]{sun2012randomized}
Xiaoming Sun, Chengu Wang, et~al.
\newblock Randomized communication complexity for linear algebra problems over
  finite fields.
\newblock In {\em Symposium on Theoretical Aspects of Computer Science},
  volume~14, pages 477--488, 2012.

\bibitem[Vad13]{Vadhan:personal}
Salil Vadhan, 2013.
\newblock personal communication.

\bibitem[Val77]{Valiant:rigidity}
LeslieG. Valiant.
\newblock Graph-theoretic arguments in low-level complexity.
\newblock In Jozef Gruska, editor, {\em Mathematical Foundations of Computer
  Science 1977}, volume~53 of {\em Lecture Notes in Computer Science}, pages
  162--176. Springer Berlin Heidelberg, 1977.

\bibitem[vN76]{van1976bound}
Cyriel van Nuffelen.
\newblock A bound for the chromatic number of a graph.
\newblock {\em American Mathematical Monthly}, pages 265--266, 1976.

\bibitem[Yao79]{yao1979some}
Andrew Chi-Chih Yao.
\newblock Some complexity questions related to distributive computing
  (preliminary report).
\newblock In {\em Proceedings of the eleventh annual ACM symposium on Theory of
  computing}, pages 209--213. ACM, 1979.

\bibitem[Zha13]{zhang2013efficient}
Shengyu Zhang.
\newblock Efficient quantum protocols for xor functions.
\newblock {\em arXiv preprint arXiv:1307.6738}, 2013.

\bibitem[ZS09]{zhang2009communication}
Zhiqiang Zhang and Yaoyun Shi.
\newblock Communication complexities of symmetric xor functions.
\newblock {\em Quantum information and computation}, 9(3\&4):0255--0263, 2009.

\bibitem[ZS10]{zhang2010parity}
Zhiqiang Zhang and Yaoyun Shi.
\newblock On the parity complexity measures of boolean functions.
\newblock {\em Theoretical Computer Science}, 411(26):2612--2618, 2010.

\end{thebibliography}

\end{document}